\documentclass[orivec,runningheads]{llncs}
\usepackage[T1]{fontenc}
\usepackage{graphicx}
\usepackage{algpseudocode}
\usepackage{caption}
\usepackage{subcaption}

\usepackage[utf8]{inputenc}

\usepackage{booktabs} 
\usepackage[ruled,noend]{algorithm2e} 
\usepackage{pifont}
\usepackage{nicefrac}
\usepackage{xcolor,colortbl}
\usepackage{graphicx}
\usepackage{amsmath}
\usepackage{amssymb}
\usepackage{bm}
\usepackage{rotating}
\usepackage{multirow}
\usepackage{latexsym}
\usepackage{enumitem}
\usepackage{bm,bbm}
\usepackage{array}

\newcommand{\bv}{\begin{array}}

\newcommand{\ma}{\mathcal A}

\newcommand{\ml}{\mathcal L}

\newcommand{\Omit}[1]{}

\usepackage{hyperref}

\begin{document}
\title{The No-show Paradox in Single Transferable Vote
under One-dimensional Preferences}
\titlerunning{No-show Paradox in STV for 1D Preferences}
%
\author{Farhad Mohsin\orcidID{0000-0003-0224-751X} }
\authorrunning{F. Mohsin}
%
\institute{College of the Holy Cross, Worcester MA 01610, USA
\\
\email{fmohsin@holycross.edu}}
\maketitle              
\begin{abstract}
The group no-show paradox (GNSP) occurs when a group of agents abstaining from voting can make the new winner more preferred to them. Previous work has suggested that even for voting rules susceptible to this paradox, it is a rare occurrence in real elections and under various assumptions. However, we find that under one-dimensional preference models such as 1D-Euclidean, single-peaked, or single-crossing preferences, Single Transferable Vote (STV), a popular runoff rule, is highly vulnerable to GNSP. This is in stark contrast to Condorcet rules, another family of rules susceptible to GNSP, where the paradox cannot occur under these one-dimensional preferences.
We theoretically identify tractable and prevalent sufficient conditions for GNSP to occur for STV under one-dimensional preference models. Through our theoretical results and experiments with synthetic preference profiles from these domains, we demonstrate that voters at the extremes of the 1D spectrum are particularly likely to cause GNSP by abstaining. Furthermore, the likelihood of occurrence increases substantially as the number of alternatives grows.
\keywords{Computational Social Choice  \and No-show Paradox \and Single Transferable Vote.}
\end{abstract}
\section{Introduction}

In social choice theory, GNSP refers to the paradoxical event where a group of voters have the incentive to abstain from voting. That is, if they abstain rather than voting truthfully, the winner would be more favorable to them. This was first studied by Fishburn and Brams~\cite{Fishburn1983:Paradoxes}. 
While GNSP does not occur for positional scoring rules like Plurality or Borda, it can unfortunately occur for a wide variety of voting rules including all Condorcet rules and many runoff-rules such as Single Transferable Vote. Existing literature examines the likelihood of GNSP for different voting rules. For example, Mohsin et al.~\cite{mohsin2023computational} showed that verifying whether a preference profile allows for GNSP is an NP-Complete Problem for a variety of voting rules. However, they present Integer Linear Program formulations for solving the problem which work reasonably well in practice for preference profiles with as high as 10 alternatives and 100 voters.
Some work has specifically looked into the likelihood of no-show paradox under specific structured preferences. Felsenthal and Nurmi~\cite{felsenthal2019no} discussed multiple voting rules under a different constraint: when the original preference profile has a Condorcet winner and the voting rule's winner matches the Condorcet winner. Among other results, they suggested that STV is invulnerable to GNSP in this constrained domain. Mohsin et al.~\cite{mohsin2023computational}'s experiments with synthetic preference profiles sampled from different statistical models showed no occurrence of the paradox for in the Single-peaked domain for the following Condorcet rules: Copeland$_{0.5}$, Maximin, and Black's Rule. Contemporarily, Kamwa et al.~\cite{kamwa:hal-03143741} theoretically proved that a variation of GNSP called the Positive Abstention Paradox can not occur for any rule that always chooses the Condorcet winner in the single-peaked domain. They additionally noted that for 3 alternatives, the likelihood of multiple no-show related paradoxes is also reduced for scoring-based runoff rules in the single-peaked domain. 

In this work, we focus on the occurrence of GNSP in restricted domains, in particular those of one-dimensional preferences, such as 1D-Euclidean, single-peaked preferences, and single-crossing preferences~\cite{Elkind2017:structured}. Additionally, there are multiple variations of paradoxes in the family of no-show paradoxes. We follow the focus on the very general scenario where one or more voters (not necessarily identical) strictly benefits by abstaining.

\paragraph{\bf Our contributions:}
\begin{itemize}
    \item We prove that GNSP under this broader definition cannot occur for Condorcet voting rules in single-crossing preferences, extending a known result for single-peaked preferences (Theorem~\ref{thm:sc}).
    \item For STV, we derive concrete sufficient conditions for GNSP to occur under continuous voter distributions and show that these conditions are prevalent under natural assumptions by developing a dynamic programming enumeration algorithm (Propositions~\ref{prop:suff}-\ref{prop:suffsc}).
    \item We formalize the ILP solution for verifying GNSP for STV suggested by Mohsin et al.~\cite{mohsin2023computational} (Section~\ref{sec:ilp}) and using synthetic data experiments show how likelihood of GNSP increases for STV. Through, further exploratory analysis, we show how voters at the ends of the 1D spectrum can be particularly incentivized to abstain due to GNSP (Section~\ref{sec:exp}).
\end{itemize}

\subsection{Related Work}

The no-show paradox, after first being introduced by Fishburn and Brams~\cite{Fishburn1983:Paradoxes}, has been discussed by \cite{Moulin1988:Condorcets,Lepelley2001:Scoring} in slightly differing variations. In this paper, we follow the generalized, group-based definition of \cite{mohsin2023computational}, where the paradox occurs if any group of voters strictly prefers the new election outcome after their collective abstention. In particular, the likelihood of no-show paradox has been studied for a variety of voting rules and different assumption about the voting profiles~\cite{ray1986practical,plassmann2014frequently,Perez2001:The-Strong,duddy2014condorcet}. For this paper, the most relevant are that of \cite{Brandt2021:Exploring} which explored Condorcet rules, \cite{kamwa:hal-03143741,felsenthal2019no}, which explored runoff rules, \cite{brandt2022minimal,mohsin2023computational} which proposed integer linear programs to verify the existence GNSP in order to find occurrences of the paradox.
Additionally, one-dimensional preferences such as single-peaked preferences~\cite{Black48:Rationale}, single-crossing preferences~\cite{Rosenschein94:Rules} etc. have been discussed as structured preference profiles where many voting rules are strategy-proof~\cite{Elkind2017:structured}. We also notice the same for Condorcet rules in both single-peaked and single-crossing preferences, where voters cannot abstain to benefit an alternative they prefer. This makes the high likelihood of the no-show paradox for STV more significant.

\section{Preliminaries}

\subsection{Voting Rules and Paradoxes}

For a set of m alternatives, $\ma$, a ranking is a complete linear order over $\ma$. $\ml(\ma)$ is the set of all rankings, and $|\ml(\ma)| = m!$. If alternative $a$ is preferred over alternative $b$, we say $a \succ b$. For a ranking $R \in \ml(\ma)$, we say $R[a]$ is the position of alternative $a$, and $R^{-1}(j)$ is the alternative at position $j$. For example, for $R = a_1 \succ a_4 \succ a_2 \succ a_3$, $R[a_4] = 2$ and $R^{-1}(3) = a_2$.
A preference profile of $n$ voters, $P \in \ml(\ma)^n$ is a collection of rankings. A {\em voting rule} is a mapping $f: \ml(\ma)^* \rightarrow \ma$ that maps a preference profile of any size to a single winner in $\ma$. To get a unique winner, voting rules use tie-breaking rules. In this paper, we consider the lexicographic tie-breaking rule, which breaks ties in favor of the lexicographically earlier alternative e.g., favoring $a_1$ in a tie between $a_1$ and $a_2$.

Consider the pairwise comparison between each pair of alternatives, $a$ and $b$. If more voters in $P$ prefer $a$ over $b$, we say $a \succ_P b$. If $a \succ_P c$ for all $c \neq a$, then $a$ is called a (strong) Condorcet winner. If an alternative $a$ is undefeated in pairwise comparison, i.e., there is no $c$ such that $c \succ_P a$, then $a$ is a weak Condorcet winner. A voting rule is called a {\em Condorcet Rule} if, whenever a strong Condorcet winner exists, the rule chooses that alternative as the winner. Examples of Condorcet rules are Copeland, Maximin, Black's Rule, Ranked Pairs, among others. For profiles with multiple weak Condorcet winners, tie-breaking rules can be used to choose one of these alternatives as the singular winner. See \cite{Brandt2016:Handbook} for a detailed exposition. 

{\em Positional Scoring Rules} are another family of voting rules that assigns a score to each position. The simplest positional scoring rule is plurality where an alternative's score is the number of top-ranked votes they receive. 
Runoff voting rules are multi-round voting rules that eliminate alternative(s) in each round. Our interest is in {\em Single Transferable Vote (STV)}, also called Instant Runoff Voting and Alternative Voting. STV works by eliminating the alternative with lowest plurality score in each round (breaking ties if necessary), transferring the eliminated alternative's votes to the next ranked alternative, and choosing the alternative remaining at the end of $m-1$ rounds.

{\em The Group No-show Paradox (GNSP)} can formally be defined as the following scenario. For a voting rule, $r$, and preference profile, $P$, assume $f(P) = a$. If there exists an alternative $b$ and a collection of preferences $G \subset P$ such that $b \succ_G a$, and $f(P - G)=b$, that means the voters in $G$ have incentive to abstain from voting as that makes an alternative they prefer over the original winner to become the new winner.

\begin{example}[GNSP Example for STV]\label{ex1}
    \[
    \begin{array}{c}
    2 \  \times a_1 \succ a_2 \succ a_3 \\
    2 \  \times a_2 \succ a_3 \succ a_1 \\
    3 \  \times a_3 \succ a_2 \succ a_1
    \end{array}
    \quad
    \xrightarrow{-1 \times a_1 \succ a_2 \succ a_3}
    \quad
    \begin{array}{c}
    1 \  \times a_1 \succ a_2 \succ a_3 \\
    2 \  \times a_2 \succ a_3 \succ a_1 \\
    3 \  \times a_3 \succ a_2 \succ a_1
    \end{array}
    \]

    The original elimination order was $a_2$ first (breaking ties lexicographically, i.e., in favor of the alternative whose name comes first lexicographically), then $a_1$, and $a_3$ becomes the winner. In the new profile, $a_1$ is eliminated first and $a_3$ gets eliminated next (again, by tie-breaker) and $a_2$ is the new winner, who is preferred more by the abstaining voter, which creates GNSP.
\end{example}

\subsection{One-dimensional Preferences}

A preference profile, $P$, is called 1D-Euclidean if all alternatives and voters can be placed on a straight line. Then, for a voter at $x$, and alternatives $a, b$, if $d(a, x) < d(b, x)$, then $a \succ b$ for $x$, where $d( \cdot )$ is the Euclidean distance function. That is, a voter prefers close alternatives to those farther away.
We also consider two generalizations of 1-Euclidean preference profiles. First, in {\em single peaked preferences}, the alternatives are placed in an axis e.g. $a_1, \rightarrow a_2 \rightarrow \dots \rightarrow a_m$. Then each voter has a most-preferred (peak) alternative, and then alternatives on both sides of the peak have decreasing preferences~\cite{Black48:Rationale}.
Another generalization is {\em single-crossing preferences}, which is based on an ordering of the voters. A preference profile $P$ is single-crossing if the voters are ordered on a single line in a way s.t. for each pair of alternatives $a, b$, there is a point $z$ where all voters to the left of $z$ prefers one of $a$ and all voters to the right prefer $b$ (or vice versa)~\cite{rothstein1991representative}.

\subsection{Integer Linear Programs (ILP)}

ILP is an optimization paradigm to find integer solutions for optimizing a linear objective function under linear and discrete constraints.Beyond standard linear inequalities, ILP formulations can accommodate logical conditions (e.g., conjunctions, disjunctions, and implications) and products of bounded integer and binary variables. While these non-linear expressions can be reformulated into equivalent linear constraints using classic modeling techniques like the big-M method~\cite{bradley1977applied,conforti2014integer}, modern commercial solvers can handle such logical operations and variable products directly. In our formalization for identifying no-show paradoxes under STV, we leverage high-level formulations to maintain conceptual clarity, implementing and solving the resulting models using the Gurobi Optimizer~\cite{gurobi}.

\section{GNSP in One Dimensional Preference Domains}\label{sec:theo}

Upon exploring the existence of GNSP for different voting rules in one-dimensional preferences, we find that Condorcet rules are immune in the domain of one-dimensional preferences. This puts our study of STV in perspective, as some work has suggested that STV can also be robust to GNSP in this domain. In this section, first we discuss existing and new results for GNSP for Condorcet rules, and then discuss STV.

\subsection{GNSP for Condorcet Rules in 1D Domains}

Kamwa et al.~\cite{kamwa:hal-03143741} proved that any voting rule that always selects a (weak) Condorcet winner in the single-peaked domain, cannot suffer any variations of GNSP. The proof their Theorem 2 can be adjusted to the following proposition.
\begin{proposition}
    If $f$ is a Condorcet rule, and $P$ a single-peaked preference profile, then GNSP cannot occur for $f$ in $P$.
\end{proposition}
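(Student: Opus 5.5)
The proof is a pairwise-majority argument that uses single-peakedness only to guarantee that the relevant Condorcet winners exist. Suppose, for contradiction, that $f(P)=a$ and that there are a nonempty $G\subset P$ and an alternative $b\neq a$ such that every voter in $G$ ranks $b\succ a$ and $f(P-G)=b$.

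The first step is to record two facts.
\begin{itemize}
\item Restricting a single-peaked profile to a sub-multiset of voters keeps it single-peaked with respect to the same axis. Hence $P-G$ is also single-peaked.
\item By Black's median voter theorem, every single-peaked profile has a weak Condorcet winner: the peak of a median voter (or any alternative between the two median peaks when $n$ is even) is undefeated in pairwise comparisons.
\end{itemize}

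The second step is to pin down what $f$ selects, and this is where I expect the only subtlety. A Condorcet rule, as defined in the preliminaries, is only obliged to pick a \emph{strong} Condorcet winner. When $n$ is even, pairwise ties can leave no strong Condorcet winner. Following the proof of Theorem~2 of Kamwa et al.~\cite{kamwa:hal-03143741}, I will use the fact (or assumption, in the tie case) that on the single-peaked domain $f$ selects a weak Condorcet winner whenever one exists. This is automatic when a strong Condorcet winner exists, and it holds for the standard Condorcet rules with lexicographic tie-breaking among weak Condorcet winners. So $a$ is a weak Condorcet winner of $P$, and $b$ is a weak Condorcet winner of $P-G$.

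The third step is the counting. Let $N_X(x,y)$ denote the number of voters in profile $X$ preferring $x$ to $y$. Since $b$ is undefeated in $P-G$, we have $N_{P-G}(b,a)\ge N_{P-G}(a,b)$. Every voter in $G$ prefers $b$ to $a$, so
\begin{align*}
N_P(b,a) &= N_{P-G}(b,a)+|G|,\\
N_P(a,b) &= N_{P-G}(a,b).
\end{align*}
Because $|G|\ge 1$, this gives $N_P(b,a)>N_P(a,b)$, i.e.\ $b\succ_P a$. This contradicts $a$ being undefeated in $P$. When strong Condorcet winners exist on both sides, the same computation contradicts $a\succ_P b$ directly. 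Hence no such $G$ and $b$ exist, and GNSP cannot occur.
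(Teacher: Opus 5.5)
Your proof is correct, but it follows a different route from the paper's. The paper only sketches this proposition. It defers to Theorem~2 of Kamwa et al.\ and gives a median-voter picture: the weak Condorcet winner is pinned down by the median peak(s), and removing voters with $b\succ a$ pushes the median away from $b$. You instead use the direct pairwise tally, which is the argument the paper itself gives for Theorem~\ref{thm:sc}. Your route uses single-peakedness only to guarantee that both $P$ and $P-G$ have weak Condorcet winners, via closure under deleting voters plus Black's theorem. The identity $N_P(b,a)=N_{P-G}(b,a)+|G|$ then finishes with no geometry, so the same lines cover single-crossing profiles and any voter-deletion-closed domain where weak Condorcet winners always exist. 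The median picture gives intuition about the direction in which the winner can move. Making ``the median moves away from $b$'' rigorous, e.g.\ when $b$ also lies between the two median peaks, comes down to your count. At least $n/2$ voters of $P$ have their peak at $a$ or beyond it on the side away from $b$; none of them is in $G$, so $a$ strictly beats $b$ in $P-G$. Finally, the hypothesis you flag is necessary, not merely convenient. Under the paper's literal definition, which only forces selection of a \emph{strong} Condorcet winner, the statement fails. Take the axis $a_1<a_2<a_3$ with voters $a_2\succ a_1\succ a_3$ and $a_3\succ a_2\succ a_1$: there is no strong Condorcet winner, so a Condorcet rule may output $a_1$. If the second voter abstains, $a_2$ becomes the strong Condorcet winner, which that voter prefers to $a_1$. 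The paper tacitly relies on the same weak-Condorcet assumption, so you were right to make it explicit.
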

This comes down to the fact that there is guaranteed to be a (weak) Condorcet winner in a single-peaked profile, that is determined by the median voter(s). And an abstention of voters with preference $b \succ a$ would always push the median voter further away from $b$, hence making GNSP impossible. Similar observations also lead to the non-existence results for GNSP in single-crossing preferences, which we present in the theorem below.

\begin{theorem}\label{thm:sc}
    If $f$ is a Condorcet rule, and $P$ a single-crossing preference profile, then GNSP cannot occur for $f$ in $P$.
\end{theorem}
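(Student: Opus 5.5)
The plan is to reduce the statement to two facts: (i) single-crossing profiles always have a (weak) Condorcet winner, determined by the median voter(s); and (ii) abstention by voters who prefer $b$ to $a$ can only widen $a$'s pairwise margin over $b$. This mirrors the single-peaked argument sketched after the preceding proposition. Following Kamwa et al.~\cite{kamwa:hal-03143741} and the paper's convention that ties among multiple weak Condorcet winners are resolved by tie-breaking, I will read ``Condorcet rule'' as a rule that selects a (weak) Condorcet winner whenever one exists. I will flag below where the weaker reading, which only constrains the rule when a strong Condorcet winner exists, needs extra care.

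\textbf{Step 1 (representative voter).} Let $P$ be single-crossing with voters ordered $v_1,\dots,v_n$. I recall the representative voter theorem~\cite{rothstein1991representative}. For each pair $x,y$, the voters preferring $x$ to $y$ form a prefix or a suffix of the order. Hence the median voter $v_{\lceil n/2\rceil}$ (when $n$ is odd) sides with the majority on every pair, and for even $n$ both middle voters side with at least a weak majority. So the top choice of a median voter is a (weak) Condorcet winner of $P$, and it is strong when $n$ is odd. I also note that single-crossingness is hereditary: deleting any set $G$ of voters leaves $P-G$ single-crossing under the induced order. So Step 1 applies to $P-G$ as well.

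\textbf{Step 2 (margin monotonicity).} Suppose for contradiction that $f(P)=a$, that $G\subset P$ with $b\succ a$ for every voter in $G$, and that $f(P-G)=b$. Write $N_Q(x\succ y)$ for the number of voters in profile $Q$ ranking $x$ above $y$. Since $a=f(P)$ is a weak Condorcet winner of $P$, we have $N_P(a\succ b)\ge N_P(b\succ a)$. Removing $G$ leaves $N(a\succ b)$ unchanged and decreases $N(b\succ a)$ by $|G|\ge 1$. Therefore
\[
N_{P-G}(a\succ b) > N_{P-G}(b\succ a),
\]
so $b$ is strictly beaten by $a$ in $P-G$. Hence $b$ is not even a weak Condorcet winner of $P-G$. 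By Step 1, $P-G$ does have a weak Condorcet winner, so $f$ must pick one, and that cannot be $b$. This is a contradiction.

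\textbf{Main obstacle.} The delicate point is the even-size case under the strict definition of Condorcet rule, where no strong Condorcet winner need exist and the rule is a priori unconstrained. My first attempt is to adopt the weak-Condorcet-consistent reading above, justified by the paper's tie-breaking convention. Under that reading, Step 1 guarantees $f(P)$ and $f(P-G)$ are weak Condorcet winners, and Step 2 finishes the argument. If the strict reading is required, I would restrict to profiles where strong Condorcet winners exist in both $P$ and $P-G$ (for example, odd sizes). In that case $f(P-G)$ is the median voter's top choice and cannot be $b$, since $a$ strictly beats $b$.
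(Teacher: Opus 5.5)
Your proposal is correct and follows the paper's proof essentially step for step. In both, $a=f(P)$ is a weak Condorcet winner, abstention by $b\succ a$ voters makes $a$ strictly beat $b$, and $P-G$ is still single-crossing, so it has a (weak) Condorcet winner, which cannot be $b$. You are more careful than the paper: you justify existence via the median voter, and you make explicit the weak-Condorcet-consistent reading of ``Condorcet rule'' that the paper uses tacitly. That reading is genuinely needed, since under the literal strong-winner-only definition a rule could output a beaten alternative in an even-sized $P-G$ whose two middle voters have different tops.
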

\proof{
    Let $a = f(P)$ the winner in the original profile. Since $f$ is a Condorcet rule in a single-crossing profile, $a$ must be a (weak) Condorcet winner. So, for any alternative $b \neq a$, we have: $N(a \succ b) \ge N(b \succ a)$ (where $N(a \succ b)$ is the number of voters with $a \succ b$) and any ties are broken in favor of $a$.
    Suppose a non-empty set of voters, $G$, with $b \succ a$ preferences abstain. Then, in the new profile $P - G$, the inequality becomes strict, i.e. $N(a \succ b) > N(b \succ a)$, because there are less number of voters with $b \succ a$ preferences. 
    Now, the single-crossing property is preserved after abstention, so there will still be a Condorcet winner. Since $b$ is defeated in pairwise comparison with $a$, $b$ cannot be a Condorcet winner. Thus, a group of voters with $b \succ a$ preferences cannot make $b$ the new winner by abstaining, and GNSP cannot occur; this completes the proof. \qed
}

\subsection{GNSP for STV in 1D Domains}

Some prior work explores the occurrence of GNSP in single-peaked or other one-dimensional profiles with small number of alternatives (e.g., $m=3$). Some consider more restrictive domains. For example, Felsenthal and Nurmi~\cite{felsenthal2019no} suggested that STV is invulnerable to GNSP when the initial profile has a Condorcet winner that is also the STV winner. Brandt et al.~\cite{brandt2022minimal} proved this to be incorrect by providing a minimal paradox with $n = 9$ voters. Our setting differs from their work slightly because of our consideration of tie-breaking, and we find an even smaller case of GNSP in single-peaked preferences with $n=7$ voters under lexicographic tie-breaking (which served as Example~\ref{ex1}). Since one-dimensional preferences always have a (weak) Condorcet winner, Example~\ref{ex1} provides a smaller paradox for the Felsenthal and Nurmi assumption. 

\subsubsection*{1D-Euclidean and Single-Peaked Preferences}

While Example~\ref{ex1}'s paradox is dependent on the tie-breaking rule, not all no-show paradox instances depend on ties. To study conditions under which no-show paradoxes can occur in one-dimensional preferences, for the next part, we restrict ourselves to 1D-Euclidean preferences. We make the assumption that there is a continuous array of voters in the $[0,1]$ interval, and there are $m$ alternatives, independently and uniformly sampled from $[0,1]$. This departure from finite preference profiles is done for the sake of observing the geometric effects of STV eliminations in each round. We will abuse notation slightly to use $a_i$ to also mean the position of the alternative $a_i$ on the number line.

For STV, in each round, each remaining alternative gets votes from all voters who rank them at the top among all the remaining ones. For our 1D geometric model, this divides the $[0,1]$ segment into nearest neighbor intervals. For example, in round $0$, for $2 \le j \le m-1$, $a_j$'s interval goes from $\frac{a_{j-1}+a_{j}}{2}$ to $\frac{a_{j}+a_{j+1}}{2}$ and has a voting support of $\frac{a_{j+1}-a_{j-1}}{2}$.

\begin{figure}
    \centering
    \includegraphics[width=\linewidth]{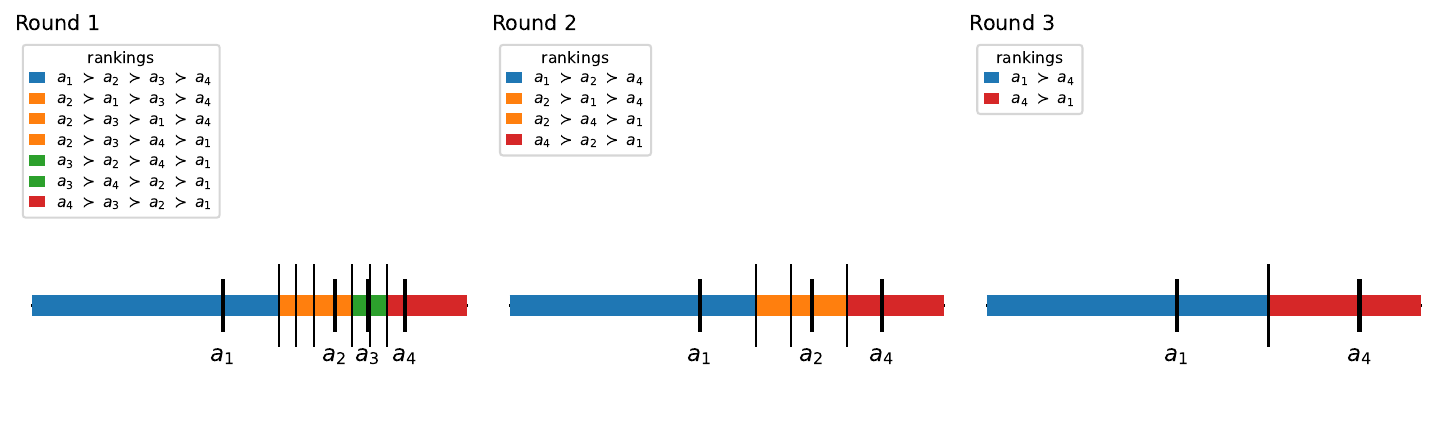}
    \caption{Illustration of STV rounds with contiguous voting regions for each alternative, with votes going to the closest remaining alternative. Note that the long vertical lines are separators for the different rankings, and short vertical lines indicate the positions of $a_1, \dots, a_4$.}
    \label{fig:STV-1d}
\end{figure}

Now, once an alternative is eliminated, in the next round,  their votes must be transferred to the neighboring alternatives because either of the neighbors must be the closest alternative to those voters.\footnote{This observation is also true for single-peaked preferences, where the alternatives are ordered on a fixed axis.} 
In the final round, the last two alternatives $a_i$ and $a_j$ (with $i < j$) divide the electorate at position $\frac{a_i+a_j}{2}$. A left-right separation must occur because no alternative can have non-contiguous voting regions in this 1D model. We call $a_i$ the {\em left consolidating alternative} and $a_j$ the {\em right consolidating alternative}. Alternative $a_i$ wins if $\frac{a_i+a_j}{2} \ge \frac{1}{2}$.

Since each round's elimination only depends on the current plurality score and the eliminations have a consolidating effect for neighboring alternatives, abstentions can have a major effect on the elimination order and the eventual STV winner. For example, imagine if for a profile with $m=5$ alternatives, $a_3$ is the STV winner, and neighboring $a_2$ get eliminated in the first round. But if voters with $a_1 \succ a_2 \succ a_3 \succ a_4 \succ a_5$ abstain from voting, making $a_1$'s plurality score lower than $a_2$, $a_1$ can get eliminated with votes consolidating at $a_2$. Then $a_3$ can get eliminated earlier than $a_2$, and $a_3$ can become the eventual winner, creating a case of GNSP. This might seem like a very special order of events must happen for this to happen, but we will see in Proposition~\ref{prop:suff} how the sufficient condition for this particular event is still rather high.

\begin{proposition}\label{prop:suff}
    Consider a 1D-Euclidean profile with five alternatives $a_1 < \dots < a_5$ and a continuum of voters in $[0,1]$, where $a_3$ is the left-consolidating STV winner, $a_4$ is the right-consolidating alternative, and the profile satisfies one of the following conditions:
    \[
    \begin{aligned}
        \bullet\quad & a_4-a_2 \le a_5 - a_3, \quad \text{or} \\[0.5ex]
        \bullet\quad & 1 - \tfrac{a_4 + a_5}{2} \le \min\left(\tfrac{a_4 - a_2}{2}, \tfrac{a_5 - a_3}{2}\right), \quad \text{and} \quad  \tfrac{a_4 - a_2}{2} < 1 - \tfrac{a_3 + a_4}{2}.
    \end{aligned}
    \]
    If there exists $\ell \in [0, a_1]$ satisfying:
    \[
        \text{(1) } \tfrac{a_3-a_2}{2} \ge a_1 - \ell, \quad
        \text{(2) } a_2 - \ell \ge \tfrac{a_4-a_3}{2}, \quad \text{and} \quad
        \text{(3) } \tfrac{a_2+a_4}{2} \ge \tfrac{1+\ell}{2}.
    \]
    then the abstention of voters in $[0, \ell]$ will change the winner from $a_3$ to $a_2$, creating a no-show paradox.
\end{proposition}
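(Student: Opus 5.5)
The plan is to recompute the STV rounds on the reduced electorate $[\ell,1]$ and show that they end with $a_2$ beating $a_4$ in the final round. Separately, I will check that the abstainers prefer $a_2$ to $a_3$. The second part is immediate: every voter in $[0,\ell]\subseteq[0,a_1]$ lies to the left of $a_2<a_3$, so is closer to $a_2$. Since $a_3$ is the original winner by hypothesis, this gives a no-show paradox once $a_2$ wins $P-G$.

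\textbf{Step 1: first-round supports.} After removing $[0,\ell]$ the first-round supports are
\[
s_1=\tfrac{a_1+a_2}{2}-\ell,\quad s_2=\tfrac{a_3-a_1}{2},\quad s_3=\tfrac{a_4-a_2}{2},\quad s_4=\tfrac{a_5-a_3}{2},\quad s_5=1-\tfrac{a_4+a_5}{2}.
\]
Write $s_1=\tfrac{a_2-a_1}{2}+(a_1-\ell)$ and $s_2=\tfrac{a_2-a_1}{2}+\tfrac{a_3-a_2}{2}$. Then condition (1) gives $s_1\le s_2$ (with the lexicographic tie-break resolving equality). Since $a_1$ is an end alternative, it never receives transfers while $a_2$ is alive, and $a_2$'s support never decreases. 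So $a_1$ is always eliminated before $a_2$.

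\textbf{Step 2: once $a_1$ is gone.} The support of $a_2$ becomes $\tfrac{a_2+a_3}{2}-\ell=\tfrac{a_3-a_2}{2}+(a_2-\ell)$. By condition (2) this is at least $\tfrac{a_3-a_2}{2}+\tfrac{a_4-a_3}{2}=\tfrac{a_4-a_2}{2}=s_3$. So $a_3$ can never overtake $a_2$ before $a_3$ is eliminated.

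\textbf{Step 3: $a_3$ is eliminated before $a_4$.} This is where the two bullet conditions enter.
\begin{itemize}
\item Under the first bullet, $s_3\le s_4$ directly, and $a_4$'s support can only grow.
\item Under the second bullet, $a_5$ has support no larger than both $s_3$ and $s_4$. So $a_5$ goes before either of them and consolidates into $a_4$, whose support becomes $1-\tfrac{a_3+a_4}{2}$. This exceeds $s_3$ by the strict inequality in the bullet.
\end{itemize}
In both cases $a_3$ is never the last of $\{a_3,a_4,a_5\}$ standing against $a_2$, and never outlasts $a_2$. So the final round is $a_2$ against $a_4$, with $a_5$ already absorbed by $a_4$ or eliminated. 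In that round $a_2$ receives $[\ell,\tfrac{a_2+a_4}{2}]$ and $a_4$ receives the rest. Hence $a_2$ wins iff $\tfrac{a_2+a_4}{2}-\ell\ge 1-\tfrac{a_2+a_4}{2}$, which is exactly condition (3).

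\textbf{Main obstacle: interleavings.} The eliminations on the left block $\{a_1,a_2\}$ and on the right block $\{a_3,a_4,a_5\}$ can interleave in the actual STV order. For example, $a_3$ or $a_5$ might be eliminated before $a_1$, which changes which supports are being compared. I would handle this with a small case analysis driven by two monotonicity facts:
\begin{itemize}
\item an alternative's support only increases as its neighbours are eliminated;
\item each inequality used above compares $a_3$ against a quantity that only grows.
\end{itemize}
Concretely, I would show by induction over rounds three invariants: $a_2$ is never the minimum while $a_1$ or $a_3$ is alive; $a_4$ is never the minimum while $a_3$ is alive; and, in the second bullet case, $a_5$ is eliminated no later than $a_3$. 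I would first check the case where $a_3$ is eliminated while $a_1$ is still present. There $a_2$'s support jumps to $\tfrac{a_4-a_1}{2}$, which still dominates $s_1$, so the argument should go through. Tie situations would be checked against the lexicographic rule, which favours $a_2$ over $a_3$ and $a_4$.
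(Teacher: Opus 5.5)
Your argument is the paper's own: (1) removes $a_1$ before $a_2$, (2) keeps $a_2$ ahead of $a_3$ once $a_1$ is gone, the bullets keep $a_4$ alive until $a_3$ falls, and (3) decides the last round; your round-by-round monotonicity invariants handle the interleavings more carefully than the paper's sketch. One correction: the paper's lexicographic rule keeps the \emph{earlier} alternative in a tie, so equality in (1) eliminates $a_2$ rather than $a_1$, and (1) must therefore be strict, a boundary defect your parenthetical glosses over and the paper's statement shares. For instance, with $(a_1,\dots,a_5)=(0.15,0.3,0.4,0.8,0.96)$ the only admissible value is $\ell=0.1$; there $a_1$ and $a_2$ tie at $0.125$ once $a_5$ leaves, so $a_2$ is eliminated and $a_3$ still wins.
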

\proof{
We are considering profiles where $a_3$ wins against the right consolidating alternative because of left-consolidation. Assume that either by itself or by consolidating with an early eliminated $a_5$, $a_4$ has higher support than $a_3$ by itself. In the first case, $a_4$'s support is higher than $a_3$'s ($\frac{a_4-a_2}{2} < \frac{a_5-a_3}{2}$). In the second case, $a_5$ has lower support than both $a_3$ and $a_4$ and once eliminated, all $a_5$'s support is consolidated into $a_4$ and $a_4$'s consolidated support is higher than $a_3$. This case indicates that $1-\frac{a_4+a_5}{2} \le \frac{a_4-a_2}{2}$ and $1-\frac{a_4+a_5}{2} \le \frac{a_5-a_3}{2}$. After $a_5$'s elimination, $a_4$'s support becomes $1-\frac{a_3+a_4}{2}$; thus, for $a_3$'s support to be lower: $\frac{a_4-a_2}{2} < 1-\frac{a_3+a_4}{2}$. Together, these comprise the original profile conditions exactly.

Now suppose voters in $[0, \ell]$ abstain, where $\ell \le a_1$. These voters prefer 
$a_1 \succ a_2 \succ a_3 \succ a_4 \succ a_5$, so if $a_2$ wins after their 
abstention, this creates a no-show paradox. The remaining electorate is $[\ell, 1]$ 
with total mass $1-\ell$. Also, note that the abstention does not have any effect on the original support relation between $a_3, a_4, a_5$ that we discussed above.

If the conditions below are satisfied, then $a_1$ can be eliminated before $a_2, a_3$, then $a_3$ will be eliminated before $a_2$, leaving $a_2$ to win against $a_4$ in the final round.
\begin{itemize}
    \item $a_1$ is eliminated before $a_2$, which requires that 
    $\frac{a_3-a_1}{2} \ge \frac{a_1+a_2}{2} - \ell$.
    \item After consolidating all of $a_1$'s voters, $a_2$ has a higher score than $a_3$, guaranteeing that $a_2$ is not eliminated before $a_3$. So 
    $\frac{a_2+a_3}{2} - \ell \ge \frac{a_4-a_2}{2}$. Based on the conditions in the original profile, $a_4$ cannot be eliminated before $a_3$ in this profile with abstentions. So $a_3$ can not consolidate votes from any left or right alternative. Thus, $a_3$ must be eliminated before $a_2$. 
    \item  Next, once $a_3$ is eliminated, $a_2$ must have enough support to become the left-consolidating winner for the remaining $1-\ell$ votes, so 
    $\frac{a_2+a_4}{2} - \ell \ge \frac{1-\ell}{2}$.
\end{itemize}
Thus, if we find any $\ell$ that satisfies these conditions, this constitutes a sequence of elimination where $a_2$ will be the new winner due to abstentions of voters who prefer $a_2 \succ a_3$, creating a case of GNSP. The simplified conditions listed in the proposition statement follow directly from these geometric bounds via straightforward algebraic rearrangement. 
\qed\vspace{2mm}
}

It should be noted that Proposition~\ref{prop:suff} is just one of several paths for how GNSP can occur for STV in 1D preferences. By symmetry, a similar construction can be done for when $a_3$ is a right-consolidating STV winner, and voters with $a_5 \succ \dots \succ a_1$ abstaining can also lead to GNSP. Similar results can be constructed for larger numbers of alternatives.

To assess how frequently these sufficient conditions arise, we conducted simulations sampling $a_1, \ldots, a_5$ uniformly at random from $[0,1]$. For each sample, we find when the original winner condition ($a_3, a_4$ being the final round with $a_3$) occurs. Given this condition, we then check if there exists 
$\ell \in [0, a_1]$ satisfying the conditions in Proposition~\ref{prop:suff}. In $100,000$ samples, $\sim 5.2\%$ satisfied the original winner condition, and out of those, $\sim 19\%$ of the profiles satisfy the sufficient conditions. We further verify that GNSP indeed occurs in all such cases with $a_2$ being the new winner. Also, while this is for $m=5$ alternatives, we can deduce that for any preference profile with $m > 5$ alternatives, after $m-5$ rounds if the remaining alternatives have positions satisfying the conditions in Proposition~\ref{prop:suff}, that means the profile is susceptible to GNSP.
This illustrates that even this one scenario in Proposition~\ref{prop:suff}, which is one of several paths for GNSP, occurs in a significant fraction of random profiles. We will see in Section~\ref{sec:exp} that in one-dimensional preferences with discrete voters, the empirical likelihood of GNSP for STV is indeed quite high, particularly as the number of alternatives increases. We should note that although the preceding discussion revolved around 1D-Euclidean preferences only, the key insight, that when an alternative is eliminated their votes are transferred to the remaining neighboring alternatives, is valid generally for single-peaked preferences, where all the alternatives are ordered in a line.  Section~\ref{sec:exp} shows that GNSP occurs with high frequency in single-peaked profiles as well.

\subsubsection*{Single-Crossing Preferences}

For single-crossing preferences, only the voters are on an axis, and we make the same assumption of a continuum of voters in the $[0,1]$ interval like before. Now, we define a {\em single-crossing domain} as a domain of rankings that start with an identity ranking $a_1 \succ \dots \succ a_m$, repeatedly select two alternatives $a_i,a_j$ that are adjacent in the current ranking and $a_i \succ a_j$ and flip them, and keep doing this till we get to $a_m \succ \dots a_1$. This will require $\frac{m(m-1)}{2}$ steps and will create a domain of rankings where each pair of alternative's preference flips exactly once. The domain can be defined in the $[0,1]$ interval with threshold points for the flips, where the order of the threshold points determine which rankings belong in the domain. 

\begin{figure}[ht]
\begin{subfigure}[h]{0.9\textwidth}
    \centering
    \includegraphics[width=\linewidth]{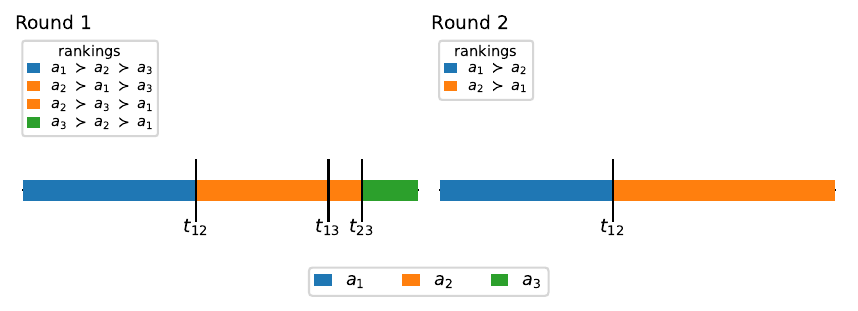}
    \caption{}
    \label{fig:stv-single-crossing}
\end{subfigure}
\begin{subfigure}[h]{0.9\textwidth}
    \centering
    \includegraphics[width=\linewidth]{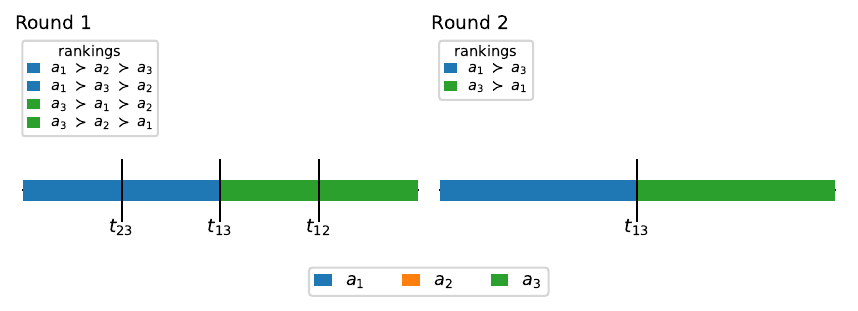}
    \caption{}
    \label{fig:stv-single-crossing-2}
\end{subfigure}
\caption{Showing the top-preference interval for each alternative and the thresholds for two 1D-single-crossing profiles}
\end{figure}

For example, Figure~\ref{fig:stv-single-crossing} shows the case for $m=3$, if we have threshold points $t_{23} < t_{13} \le t_{12}$ where $t_{ij}$ is the flip point for $a_i, a_j$, it leads to a single-crossing domain with the following interval lengths and corresponding voter preferences in $[0,1]$:
\begin{itemize}
    \item $[0, t_{23}]$ of length $t_{23}$ supporting $a_1 \succ a_2 \succ a_3$,
    \item $[t_{23}, t_{13}]$ of length $t_{13} - t_{23}$ supporting $a_1 \succ a_3 \succ a_2$,
    \item $[t_{13}, t_{12}]$ of length $t_{12} - t_{13}$ supporting $a_3 \succ a_1 \succ a_2$,
    \item $[t_{12}, 1]$ of length $1 - t_{12}$ supporting $a_3 \succ a_2 \succ a_1$.
\end{itemize}
Notice that in this domain, $a_1$ is ranked at the top by all voters in the contiguous interval $[0, t_{13}]$, which we will call $a_1$'s {\em top-preference interval}. A key structural feature of single-crossing preference profiles under STV is that an alternative's top-preference interval remains strictly contiguous throughout all the elimination rounds. This contiguity holds for two reasons. First, in the full profile, any alternative $a$’s top-preference region is contiguous because voters can cross a preference threshold between any two alternatives at most once. Thus, once $a$ drops below another alternative in the linear ordering of voters, it can never re-emerge as a top preference. Second, the single-crossing property is perfectly preserved under alternative elimination~\cite{bredereck2013characterization}. Removing an alternative never introduces non-monotonic twists or preference cycles. Thus, the reduced profile at any arbitrary round of STV remains single-crossing, ensuring that the remaining top-preference intervals never fragment into disconnected pieces. Another thing to note is that not all alternatives must have a non-empty top-preference region, as is seen in a different single-crossing domain in Figure~\ref{fig:stv-single-crossing-2}.



\begin{proposition}\label{prop:suffsc}
    Consider a single-crossing profile with alternatives $a_1, \dots, a_m$ and a continuum of voters in $[0,1]$, where $m-3$ alternatives have empty top-preference regions and $a'_1,a'_2,a'_3$ are the competitive alternatives with pairwise thresholds satisfying $t_{12} < t_{13} < t_{23}$. Suppose the original profile satisfies the following conditions:
    \[
    \bullet \quad  t_{23} \le 2t_{12}, \quad 
    \bullet \quad t_{23} < \tfrac{1}{2}(1+t_{12}), \quad \text{and} 
    \quad \bullet \quad t_{13} \ge \tfrac{1}{2}
    \]
    If there exists an abstention interval length $\ell \in [0,1-t_{23}]$ satisfying:
    \[
    \text{(1) } 2t_{23} - t_{12} \ge 1 - \ell, \quad\text{and} \quad \text{(2) } t_{12} < \tfrac{1}{2}(1-\ell)
    \]
    then the abstention of $a'_3 \succ a'_2 \succ a'_1$ voters in $[1 - \ell,1]$ will change the winner from $a'_1$ to $a'_2$, creating a no-show paradox.
\end{proposition}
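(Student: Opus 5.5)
The plan is to track the STV rounds explicitly, first on the full profile and then on the profile with the abstainers removed. In both cases only $a'_1,a'_2,a'_3$ carry any first-place support. The $m-3$ alternatives with empty top-preference regions have zero plurality score, so they are eliminated first without transferring any votes. The single-crossing property survives their removal (as noted before the proposition, via~\cite{bredereck2013characterization}), so I may treat the profile as a three-alternative single-crossing profile.

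I will write $a_i$ for $a'_i$ throughout. Starting from the identity ranking, the order $t_{12}<t_{13}<t_{23}$ of the thresholds fixes the rankings:
\begin{itemize}
\item $[0,t_{12}]$ has $a_1\succ a_2\succ a_3$;
\item $[t_{12},t_{13}]$ has $a_2\succ a_1\succ a_3$;
\item $[t_{13},t_{23}]$ has $a_2\succ a_3\succ a_1$;
\item $[t_{23},1]$ has $a_3\succ a_2\succ a_1$.
\end{itemize}
The initial plurality scores are therefore $t_{12}$ for $a_1$, $t_{23}-t_{12}$ for $a_2$, and $1-t_{23}$ for $a_3$.

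\emph{Original profile.} The condition $t_{23}\le 2t_{12}$ gives $a_2$'s score $\le a_1$'s score. The condition $t_{23}<\frac{1}{2}(1+t_{12})$ gives $a_2$'s score $<a_3$'s score. Hence $a_2$ is eliminated first; a tie with $a_1$ is broken lexicographically in favour of $a_1$. Voters in $[t_{12},t_{13}]$ transfer to $a_1$ and voters in $[t_{13},t_{23}]$ transfer to $a_3$. The final round is therefore $a_1$ with support $t_{13}$ against $a_3$ with support $1-t_{13}$. Since $t_{13}\ge\frac12$, $a_1$ wins, again using lexicographic tie-breaking at equality.

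\emph{After abstention.} Remove the voters in $[1-\ell,1]$. Because $\ell\le 1-t_{23}$, these all hold $a_3\succ a_2\succ a_1$, so they strictly prefer $a_2$ to the original winner $a_1$. The remaining mass is $1-\ell$, with scores $t_{12}$, $t_{23}-t_{12}$ and $1-\ell-t_{23}$ for $a_1,a_2,a_3$. Condition (1) is exactly $1-\ell-t_{23}\le t_{23}-t_{12}$, so $a_2$ cannot be the unique lowest alternative; in a tie with $a_3$, $a_2$ is favoured. I then split on which of $a_1,a_3$ is eliminated first.
\begin{itemize}
\item If $a_3$ is eliminated first, its voters all rank $a_2$ next. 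So $a_2$ holds $[t_{12},1-\ell]$, of mass $1-\ell-t_{12}$, against $a_1$'s $t_{12}$. Condition (2), $t_{12}<\frac12(1-\ell)$, makes $a_2$ the strict winner.
\item If $a_1$ is eliminated first, its voters $a_1\succ a_2\succ a_3$ transfer to $a_2$. Then $a_2$ holds $t_{23}$ against $a_3$'s $1-\ell-t_{23}$. From (1), $1-\ell\le 2t_{23}-t_{12}<2t_{23}$, where the last step uses $t_{12}>0$, so $a_2$ strictly wins.
\end{itemize}
In either branch $a_2$ is the new winner, which gives the GNSP.

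The only delicate point I expect is the bookkeeping of ties and of which alternative is eliminated first after abstention. The argument should not assume that $a_3$ is eliminated first, since that is not implied by the hypotheses. The two-branch case split above handles this, with ties resolved by the lexicographic rule (treating $a'_1,a'_2,a'_3$ in their labelled order). A secondary check is that $t_{12}>0$, i.e.\ that $a_1$ is genuinely competitive, which the hypothesis of three competitive alternatives supplies.
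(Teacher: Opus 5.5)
Your proof is correct and follows the same route as the paper. It uses the same top-preference interval lengths $t_{12}$, $t_{23}-t_{12}$, $1-t_{23}$; in the original profile $a'_2$ is eliminated first and $a'_1$ wins via $t_{13}\ge\frac12$, and after abstention $a'_3$ is eliminated first and $a'_2$ wins via condition (2).

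One correction to your commentary: $a'_3$ being eliminated first \emph{is} implied by the hypotheses. The abstention leaves the scores of $a'_1,a'_2$ unchanged, so $t_{23}\le 2t_{12}$ and (1) give $1-\ell-t_{23}\le t_{23}-t_{12}\le t_{12}$, and lexicographic tie-breaking then removes $a'_3$. Your $a'_1$-first branch is therefore vacuous, though harmless, and $t_{12}>0$ already follows from $t_{12}<t_{23}\le 2t_{12}$.
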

\proof{
    Since all other $m-3$ alternatives have 0 top preference support, they will be eliminated in the first rounds, leaving the three competitive alternatives. We assumed the pairwise thresholds to have $t_{12} < t_{13} < t_{23}$, thus the original top preference interval lengths for $a'_1,a'_2,a'_3$ are respectively $t_{12}, t_{23}-t_{12}, 1-t_{23}$. The original winner was $a'_1$.
    \begin{itemize}
        \item For the original profile, let $a'_2$ be eliminated first, thus the support top preference interval must be smaller than the others. So, $t_{23} - t_{12} \le t_{12}$ and $t_{23} - t_{12} < 1 - t_{23}$. These lead to the first two conditions.
        \item Originally, after $a'_2$ is eliminated, $a'_1$ wins, so $t_{13} \ge \frac{1}{2}$.
        \item After the abstention of the rightmost voters in $[\ell,1]$, let $a'_3$ be eliminated first because of $a'_2$ having a larger top preference interval. So, $t_{23}-t_{12} \ge 1 - \ell - t_{23}$.
        \item Finally, after $a'_3$'s elimination, let $a'_2$, which has a larger top preference interval now, be the new winner with majority support. Thus, $1-\ell-t_{12} > \frac{1-\ell}{2}$.
    \end{itemize}
    The simplified conditions stated in the proposition then follow via straightforward algebraic rearrangement. Finding any $\ell$ satisfying these bounds thus completes the paradox since the abstaining voters have a preference of $a'_2 \succ a'_1$.
\qed\vspace{2mm}
}

While the proposition depends on a high number of alternatives with zero top-preference. To justify this condition, we enumerate all single-crossing domains starting from a fixed identity permutation $R_1 = a_1 \succ \dots \succ a_m$ based on the structure of permutation reductions. As established previously, a single-crossing domain corresponds to a single maximal chain of adjacent swaps. Each step in the domain flips a single adjacent pair, moving along a path of exactly $\frac{m(m-1)}{2}$ transitions in the maximal chain to get to the completely reversed permutation $R' = a_m \succ \dots \succ a_1$. This is equivalent to the combinatorial problem of choosing a reduced word for the longest element in the symmetric group—until reaching the completely reversed ranking $a_m \succ \dots \succ a_1$~\cite{abello1991:weak,bredereck2013characterization}. While the total number of these maximal chains, thus the total number of possible single-crossing domains, can be analytically counted~\cite{stanley1984number}, we developed a {\em Dynamic Programming} algorithm (detailed in the Appendix) to explicitly traverse this state space. 
This algorithmic approach provides added information about the structural properties of these domains, specifically the distribution of zero top-preference alternatives. We define our sub-problem state as 
$DP(R, \vec{k})$, where $\vec{k} = \langle k_1, \dots, k_m \rangle$, which tracks the number of unique paths leading from the identity to a specific permutation $R$ such that alternative $a_j$ has spent exactly $k_j$ rankings at the top spot. The algorithm terminates and yields the final distribution when $R$ reaches the completely reversed ranking $R'$.
Traversing $DP(R', \vec{k})$, we can compute the number of domains where exactly $m'$ alternatives have $k_j = 0$, i.e., never appear at the top. 
For $m \in \{4, 5, 6, 7\}$, the fractions of domains where exactly $m-3$ alternatives receive zero top-preferences are $0.625$, $0.578$, $0.467$, and $0.359$, respectively. Thus, if we had a preference profile where the support for each ranking in the domain is equal, our condition in Proposition~\ref{prop:suffsc} has quite a high prevalence, showing how STV can be vulnerable to GNSP for single-crossing preferences.

Now, using the same dynamic programming algorithm as above, we can also find domains where an alternative is at the top of more than half of the rankings in a domain. Since each domain has $\frac{m(m-1)}{2} + 1$ unique rankings, because of the $\frac{m(m-1)}{2}$ flips, we need to check $DP(R', \vec{k})$ where
$\max_j k_j \ge  \lceil \frac{1}{2} (\frac{m(m-1)}{2} + 1) \rceil$.
This counts the number of single-crossing domains where alternative $a$ is top-ranked in a majority of the rankings in that domain. If each ranking had equal support, this guarantees $a$ to be the majority winner at the beginning, and in the case that such an alternative exists, GNSP can not occur. Because any alternative that does not have $a$ as  their top choice, when abstaining, increases $a$'s vote share and thus cannot cause GNSP for a plurality-based runoff rule like STV.
For $m \in \{4, 5, 6, 7, 8\}$, the fractions of domains where a single alternative is top-ranked in a majority of the rankings are $0.500$, $0.396$, $0.431$, $0.328$, and $0.200$, respectively. Thus, a large number of domains are less vulnerable to GNSP for single-crossing preferences, reducing the effect. We will see this reflected in the experiments in Section~\ref{sec:exp} where the likelihood of no-show paradox will be lower for single-crossing preferences compared to single-peaked and 1D-Euclidean preferences.

\section{An ILP for finding GNSP in STV}\label{sec:ilp}

Mohsin et al.~\cite{mohsin2023computational} introduced ILP formulations for verifying GNSP, outlining an STV model in their appendix. We provide a complete formalization—with explicit variables, and constraints to ensure reproducibility and support our experimental evaluation (Section~\ref{sec:exp}). This ILP approach is highly efficient for one-dimensional profiles, where the number of unique rankings drops far below the general $m!$ case. Finally, we verified correctness of our STV formulation by confirming its outputs match a brute-force search for minimal paradoxes.

For a preference profile, $P$, let $f_{STV}(P) = a$, be the STV winner. For simplicity's sake, we make the same assumption as \cite{mohsin2023computational} made, of breaking ties using lexicographic tie-breaking. For each alternative $b \neq a$, we need a separate ILP formulation where $b$ becomes the new winner with only voters with $b \succ a$ preferences abstaining. 

\paragraph{\bf Variables} 

\noindent For $m$ alternatives, there can be up to $m!$ unique rankings in a preference profile. Assume the unique rankings are $R_1, R_2, \dots, R_M$ with $M \le m!$. If alternative $c$ is preferred over $d$ in $R_i$, we say $c \succ_{R_i}d$. Let $n_i$ be the number of voters whose vote is $R_i$. Additionally, we define $R_{ab}$ and $R_{ba}$ as the set of rankings with $a \succ b$ and $b \succ a$ respectively. As mentioned before, when $a$ is the original winner, and $b$ is the new (paradoxical) winner, we only consider abstentions from $R_{ba}$.

\noindent For each ranking $R_j \in R_{ba}$, let $x_j \in [0, n_j]$ denote the number of voters with ranking $R_j$ after abstention. These are our primary decision variables. We also require the following auxiliary variables to encode the STV eliminations:

{\footnotesize
\begin{center}
\setlength{\tabcolsep}{8pt} 
\begin{tabular}{l|l|p{6cm}}
Variable & Type & Purpose \\
\midrule
$e(c,r)$ & Binary & 1 if alternative $c$ is eliminated in round $r$ \\
$E(c, r)$ & Binary & 1 if alternative $c$ is eliminated in or before round $r$. $E(c,r) = \sum_{r'=0}^r e(c,r')$. \\
$T(j,k,r)$ & Binary & tracks position eliminations for transfers \\
$STV_{direct}(c)$ & Integer & initial plurality score, counts direct votes  \\
$STV_{transfer}(c,r)$ & Integer & transferred votes \\
$STVScore(c,r)$ & Integer & total score \\
\end{tabular}
\end{center}
}

\noindent Precise definitions are given below with the constraints.

\paragraph{\bf Constraints}

\noindent For STV, we need to consider the score of the remaining alternatives in each round. The score for alternative $c$ at round $1$ is $STV_{direct}(c)$, which counts votes for $c$ as the top choice. For round $r \ge 2$, the score is
\[
STVScore(c,r) = \big(1- E(c, r-1)\big) \big(STV_{direct}(c) + STV_{transfer}(c,r) \big)
\]
where
$STV_{transfer}(c, r)$ counts transferred votes from all $R_i$ where all alternatives $d \succ_{R_i} c$ have already been eliminated. Additionally, the $\big(1- E(c, r-1)\big)$ multiplication ensures a $0$ score for already eliminated alternatives. This product is linearized using the fact that $E(c,r-1)$ is binary.

To track the transfer of votes, define $T(j,k,r)=1$ for $r > 1$, iff all alternatives at positions $1, \dots, k$ in ranking $R_j$, (i.e., $R_j^{-1}[1], \dots, R_j^{-1}[k])$ have been eliminated strictly before round $r$. 
For all rankings $R_j$ and rounds $r$: 
\begin{itemize}
    \item $T(j,1,r) = E[R_j^{-1}[1], r-1]$ and 
    \item we inductively define $T(j,k,r) = \big(T(j, k-1, r) \wedge E(R_j^{-1}[k], r-1) \big)$ for $k \geq 2$.
\end{itemize}

Now, with these auxiliary variables, we can restate $STV_{transfer}(\cdot)$ as follows:
\[
    STV_{transfer}(c, r) = \sum_{k=2}^r \Big( \sum_{\substack{R_j \in R_{ab}\\R_j[c] = k}} n_j T(j,k-1,r) + \sum_{\substack{R_j \in R_{ba}\\R_j[c] = k}} x_j T(j,k-1,r) \;\Big)
\]

Now, the score variable can be tied to the elimination-related variables. 
\[
    \big(e(c,r) = 1 \wedge E(c', r-1) = 0\big) \implies STVScore(c,r) \leq STVScore(c', r)
\]
with strict inequality when $c'$ is lexicographically after $c$ so that $c$ is eliminated according to the lexicographic tie-breaking rule. The implication above is linearized exploiting that both variables in the conjunction are binary.\footnote{This constraint can be modified to account for other tie-breaking rules as well}

Finally, exactly one non-$b$ alternative is eliminated per round: $\sum_{c \neq b} e(c, r) = 1$; each such alternative is eliminated exactly once: $\sum_{r=0}^{m-2} e(c, r) = 1$; and $b$ never is: $\sum_{r} e(b, r) = 0$.

\paragraph{\bf Objective function}

While any feasible solution satisfying all constraints will find an instance of GNSP, we can {\em minimize} the number of abstentions with the following objective:  $\min \sum_{R_j \in R_{ba}} (n_j - x_j)$.

\section{Empirical Results}\label{sec:exp}

This section contains experimental results of two kinds. First, using the ILP described in Section~\ref{sec:ilp}, we calculate empirical likelihood of GNSP for STV under various 1D preference models. Then, we focus specifically on strategic abstentions for voters to see whether a voter or group of voters might have incentive to abstain based on knowledge about positions of the alternatives.\footnote{Code at \url{https://github.com/farhadmohsin/gnsp-stv-1d} with additional details in the Appendix}

\subsection{Likelihood of GNSP in STV}

\paragraph{\bf Sampling Methods} All of our experiments revolve around the one-dimensional distributions we have discussed before.
We consider two well-known methods for sampling single-peaked preference profiles. Conitzer's algorithm~\cite{Conitzer09:Eliciting} samples in a way such that the probability for each alternative being the peak is equal. On the other hand, Walsh's method~\cite{walsh15:generating} samples so that for the chosen order of alternatives, all single-peaked preferences are sampled with equal probability. 
For single-crossing preferences, we follow the method suggested in \cite{Szufa2020:Drawing}. We first create a single-crossing domain but randomly swapping two eligible alternatives at a time until we get to all possible swaps. Then all the votes are sampled uniformly at random from this domain.
For 1D-Euclidean, we fix the $m$ alternatives at $\frac{1}{2m}, \frac{3}{2m}, \dots, \frac{2m-1}{2m}$ and sample $n$ voters, uniformly at random from $[0,1]$. After that, we determine each voter's ranking based on the distance to each alternative. This ensures that the segment in $[0,1]$ where voters could rank an alternative at top is equal for all alternatives. For all of our distributions, we sample 10 batches of 1000 preference profiles, and calculate the empirical likelihood of GNSP occurrence, along with the variance between the batches.

\begin{figure}[ht]
    \centering
    \includegraphics[width=\linewidth]{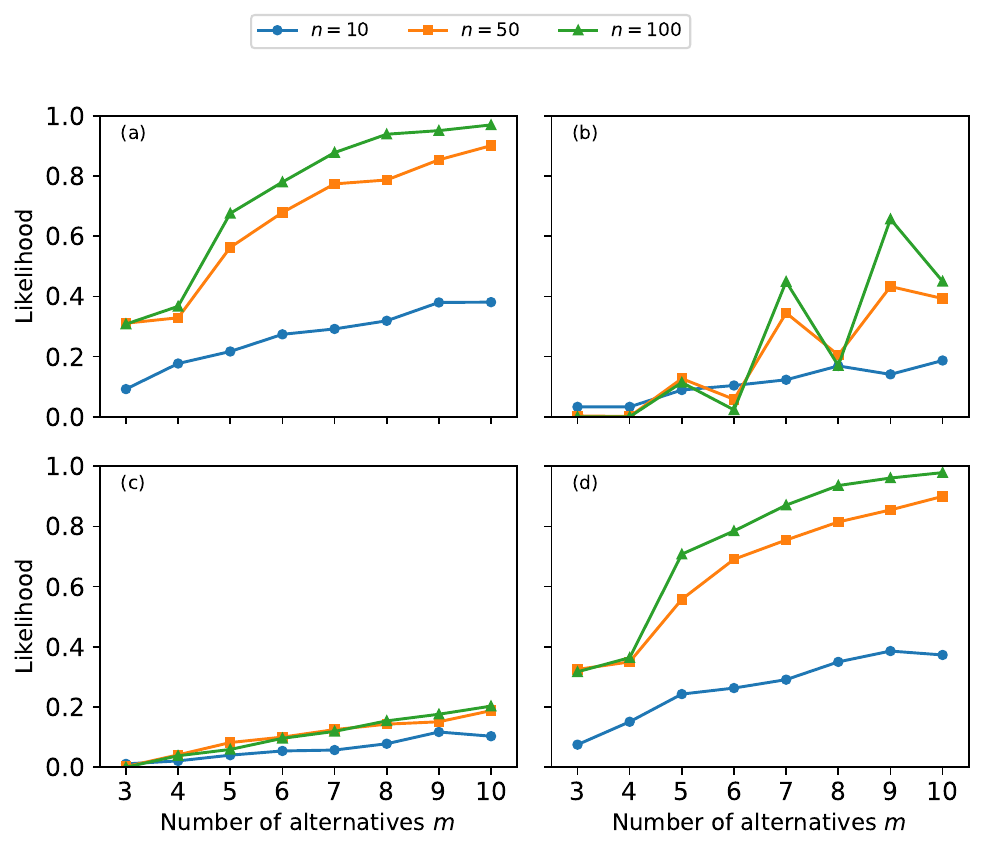}
    \caption{Likelihood of GNSP for STV for increasing number of alternatives (a) Single-Peaked (Conitzer), (b) Single-Peaked (Walsh), (c) Single-Crossing, (d) 1D-Euclidean}
    \label{fig:likelihood}
\end{figure}

Figure~\ref{fig:likelihood} shows how the likelihood of GNSP changes for different distributions. For all distributions, we generally see the likelihood increase with an increase in either the number of voters or the number of alternatives. In particular, for single-peaked preferences, when sampled with Conitzer's method or the 1D-Euclidean method, we see the empirical likelihood go up to as high as 95\% by $m=9$.\footnote{See the full tables along with error margins in the Appendix.} This highlights the high risk of GNSP for this type of profiles. One major difference in the preference profiles sampled from these two sampling methods is that, in order to give all the alternatives an equal chance of being the peak, Conitzer's method samples more votes at the two extremes ($a_1 \succ \dots \succ a_m$ and its reverse), and these voters are  highly likely to have an effect on the outcome by abstaining, as exemplified by Proposition~\ref{prop:suff}. In Figure~\ref{fig:heatmap5}(a), we see a heatmap of how the winner changes due to abstention in cases of GNSP for single-peaked profiles. The high density towards the center ($a_2,a_4 \rightarrow a_3, a_3 \rightarrow a_2,a_4$) are almost entirely caused by voters at either spectrum abstaining. For example, $a_1 \succ \dots \succ a_5$ voters (and their close neighbors) abstaining will have the effect of changing the winner from $a_3$ to $a_2$ or $a_4$ to $a_3$. And this effect is similar for higher number of alternatives as well. We show the heatmap for $m = 9$ in the Appendix. For single-peaked preferences sampled with Walsh's method, we see that the likelihoods are non-monotonic in $m$, particularly fluctuating by parity, but the overall upward trend in GNSP likelihood is preserved within each parity class. We give an informal discussion, rooted in how Walsh sampling concentrates STV winners near the center and how for odd-vs-even number of alternatives, there can be one-vs-two central alternatives, in the Appendix.

Conversely, for single-crossing preferences in Figure~\ref{fig:likelihood}(c), while there is an increasing fraction of cases of GNSP, the rate is comparatively lower than for single-peaking preferences. This aligns with our observation that there is a large number of domains of single-crossing preferences where GNSP is unlikely to occur. The heatmap at Figure~\ref{fig:heatmap5}(b) indicates how the winners are changing from $a_1$ or $a_5$ to one of the inner alternatives. In Proposition~\ref{prop:suffsc}, we showed that a likely transition in case of GNSP is that the alternative in the middle will be the new winner due to abstentions, which matches with the observations in the heatmap.

\begin{figure}[t]
    \centering
    \includegraphics[width=\linewidth]{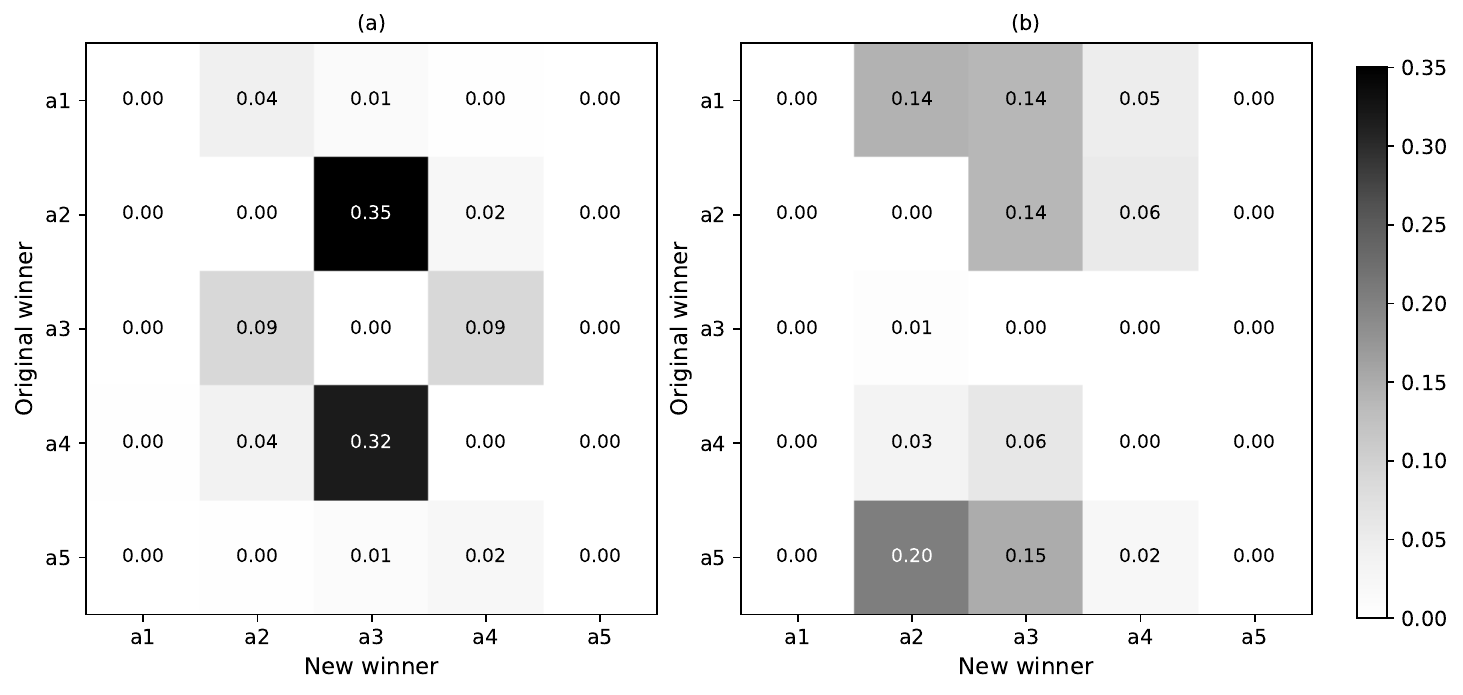}
    \caption{Likelihood heatmap of winner transitions caused by GNSP for STV. (a) Single-peaked preferences (Conitzer), (b) Single-crossing preferences}
    \label{fig:heatmap5}
\end{figure}

\setlength{\tabcolsep}{6pt} 

\begin{table}[htbp]
\centering
\caption{Fractions of profiles (out of profiles where $k$ abstentions of the type is possible) by voter type and abstention size for $m=5$, where group is better/worse. \textbf{Bold} indicates scenarios where abstaining is highly incentivized.}
\label{tab:abstention}
\begin{tabular}{l*{6}{c}}
& \multicolumn{6}{c}{Abstention Size} \\
\cmidrule(lr){2-7}
& \multicolumn{2}{c}{$k=1$} & \multicolumn{2}{c}{$k=2$} & \multicolumn{2}{c}{$k=3$} \\
\cmidrule(lr){2-3} \cmidrule(lr){4-5} \cmidrule(lr){6-7}
Voter Type & Better & Worse & Better & Worse & Better & Worse \\
\midrule
$a_1 \succ a_2 \succ a_3 \succ a_4 \succ a_5$ & 0.025 & 0.049 & 0.047 & 0.059 & 0.068 & 0.102 \\
$a_2 \succ a_1 \succ a_3 \succ a_4 \succ a_5$ & \textbf{0.055} & 0.009 & \textbf{0.099} & 0.009 & \textbf{0.141} & 0.012 \\
$a_2 \succ a_3 \succ a_1 \succ a_4 \succ a_5$ & 0.026 & 0.026 & 0.049 & 0.047 & 0.058 & 0.077 \\
$a_3 \succ a_2 \succ a_4 \succ a_1 \succ a_5$ & 0.009 & 0.071 & 0.014 & 0.097 & 0.020 & 0.166 \\
$a_3 \succ a_4 \succ a_2 \succ a_5 \succ a_1$ & 0.008 & 0.009 & 0.014 & 0.074 & 0.019 & 0.088 \\
$a_4 \succ a_3 \succ a_5 \succ a_2 \succ a_1$ & 0.028 & 0.019 & 0.045 & 0.045 & 0.059 & 0.063 \\
$a_4 \succ a_5 \succ a_3 \succ a_2 \succ a_1$ & \textbf{0.052} & 0.000 & \textbf{0.095} & 0.000 & \textbf{0.140} & 0.001 \\
$a_5 \succ a_4 \succ a_3 \succ a_2 \succ a_1$ & 0.018 & 0.014 & 0.037 & 0.064 & 0.062 & 0.079 \\
\bottomrule
\end{tabular}
\end{table}

\subsection{Strategic Abstention in STV}

GNSP, while undesirable, has generally been considered low-risk because of low likelihood for most voting rules. But this work identifies one domain (that of one-dimensional preferences) where the risk is rather high. This leads us to explore whether voters can think of strategically abstaining even without full information about all other voters. In this section, we perform a brief experiment with simulated data with 1D-Euclidean preferences. We sample 10,000 preference profiles with the fixed alternative locations, and for individual ranking type, we observe the effect of abstentions. We consider three type of outcomes, no change to winner, new winner more preferred, and new winner less preferred. And we find that for $m \in \{4,5,6\}$, there are always voter types, for whom the fraction of profiles where new the winner is more preferred is higher than the profiles where the new winner is less preferred. For example, for $m=5$, if a group of $a_2 \succ a_1 \succ a_3 \succ a_4 \succ a_5$ abstain in group sizes of $k \in \{1,2,3\}$, the group is up to ten times more likely to be strictly better off as opposed to strictly worse (in the vast majority, the abstention still has no effect). See Table~\ref{tab:abstention} for details (more results in the Appendix). This indicates that knowledge of alternative positions, along with knowledge with distribution of the voters 
can be enough to give voters a good estimate of their expected benefit from abstaining.

\section{Discussion and Future Work}

Our theoretical and empirical work show that, STV, a popular choice of voting rule, suffers highly from the risk of the no-show paradox in structured one-dimensional settings. This risk is particularly high under single-peaked preferences, when the extreme positions at opposite ends of the spectrum maintain a high likelihood of attracting voter peaks. As our simulations demonstrate, the empirical likelihood of GNSP can go beyond $95\%$ for 8–10 alternatives. This is made worse by the fact that with knowledge about the alternative positions and distribution of voters, it can be computationally easy for some voters to find probabilistic justification to abstain from voting to get a strictly preferred outcome.
Future work can focus on a full game-theoretic characterization of individual strategic abstention under partial or noisy information. 

\newpage

\begin{credits}
\subsubsection{\ackname} We thank Lirong Xia for continued discussions on no-show paradoxes, and Zack Fitzsimmons for valuable feedback on an earlier version, all of which greatly improved the logic and readability of the paper.

\end{credits}
%
%
%
\bibliographystyle{splncs04}
\bibliography{new_ref,references}

\newpage
\appendix 

\section{Dynamic Programming Algorithm for Single-Crossing Domains}

To create any single-crossing domain, we always start with the same identity ranking, $R_1 = a_1 \succ \dots \succ a_m$. At each step for the current ranking $R$, pick two adjacent alternatives at positions $i, i+1$ such that $R[i]$ is lexicographically before $R[i+1]$. In that case, we can flip their positions to get to the next ranking, $P$. And we keep adding more rankings to the domain until we reach the reverse of the identity, $R' = a_m \succ a_{m-1} \succ \dots \succ a_1$.

Our goal is to tabulate $DP(R, \vec{k})$ where $\vec{k} = \langle k_1, \dots, k_m \rangle$, which tracks the number of unique paths leading from the identity to a specific permutation $R$ such that alternative $a_j$ has spent exactly $k_j$ rankings at the top spot. 

Define $\hat{e_i}^{(m)}$ to be the unit vector of length $m$ where the $i$-th entry is $1$ and everything else is $0$.
Naturally, we can initialize $DP(R_1, \hat{e_1}^{(m)}) = 1$ since $a_1$ is the element in top in $R_1$. From there, every time, we perform a flip to get from ranking $R_i$ to the next ranking $R_j$, we get:
\[
DP(R_j,\ \vec{k} + \hat{e}_i^{(m)}) = DP(R_j,\ \vec{k} +
  \hat{e}_i^{(m)}) + DP(R_i,\ \vec{k}),
\]
where $a_i = R_j^{-1}(1)$, the top-ranked alternative of $R_j$.

This DP algorithm is better than a brute force approach would have been because many chains collapse to the same $(R, \vec{k})$ chain. However, the number of possible single-crossing domains still grows super-exponentially fast, so we could only run our algorithm for up to $m=8$, which nonetheless showed us the prevalence we were looking for and which we reported in the body of the paper.

\section{Likelihood of GNSP-vs-m-vs-n tables}

\begin{table}[htbp]
\centering
\caption{GNSP Likelihood For Single-Peaked (Conitzer)}
\begin{tabular}{c|c|c|c}
$m$ & $N=10$ & $N=50$ & $N=100$ \\
\midrule
3  & $9.2 \pm 2.7$   & $31.1 \pm 4.93$ & $30.8 \pm 3.61$ \\
4  & $17.7 \pm 4.42$ & $32.9 \pm 3.57$ & $36.7 \pm 5.7$  \\
5  & $21.7 \pm 3.2$  & $56.3 \pm 5.1$  & $67.6 \pm 2.8$  \\
6  & $27.4 \pm 4.35$ & $67.9 \pm 5.67$ & $78 \pm 5.27$   \\
7  & $29.2 \pm 3.12$ & $77.4 \pm 4.25$ & $87.8 \pm 4.24$ \\
8  & $31.9 \pm 4.51$ & $78.7 \pm 3.92$ & $93.9 \pm 2.13$ \\
9  & $38 \pm 4.27$   & $85.4 \pm 2.76$ & $95.1 \pm 2.23$ \\
10 & $38.1 \pm 4.58$ & $90.1 \pm 3.57$ & $97 \pm 1.94$   \\
\end{tabular}
\end{table}

\begin{table}[htbp]
\centering
\caption{GNSP likelihood for Single-Peaked (Walsh)}
\begin{tabular}{c|c|c|c}
$m$ & $N=10$ & $N=50$ & $N=100$ \\
\midrule
3  & $3.3 \pm 1.7$   & $0.3 \pm 0.67$  & $0 \pm 0$        \\
4  & $3.3 \pm 1.64$  & $0.1 \pm 0.32$  & $0 \pm 0$        \\
5  & $8.9 \pm 2.81$  & $12.7 \pm 3.83$ & $11.4 \pm 2.95$  \\
6  & $10.4 \pm 2.8$  & $5.8 \pm 2.25$  & $2.3 \pm 1.49$   \\
7  & $12.3 \pm 2.5$  & $34.5 \pm 1.51$ & $44.9 \pm 4.63$  \\
8  & $16.9 \pm 2.92$ & $20.5 \pm 2.51$ & $17 \pm 4.52$    \\
9  & $14.1 \pm 3.38$ & $43.3 \pm 5.62$ & $65.7 \pm 5.12$  \\
10 & $18.7 \pm 4.37$ & $39.3 \pm 3.95$ & $45 \pm 4.27$    \\
\end{tabular}
\end{table}

\begin{table}[htbp]
\centering
\caption{GNSP likelihood for Single-Crossing}
\begin{tabular}{c|c|c|c}
$m$ & $N=10$ & $N=50$ & $N=100$ \\
\midrule
3  & $1.1 \pm 0.88$ & $0.2 \pm 0.42$  & $0 \pm 0$        \\
4  & $2.1 \pm 1.85$ & $4.1 \pm 1.2$   & $3.8 \pm 2.2$    \\
5  & $4 \pm 1.83$   & $8.2 \pm 2.39$  & $5.9 \pm 1.97$   \\
6  & $5.4 \pm 2.46$ & $10 \pm 2.62$   & $9.6 \pm 2.84$   \\
7  & $5.7 \pm 2.21$ & $12.5 \pm 3.89$ & $11.9 \pm 2.69$  \\
8  & $7.8 \pm 2.39$ & $14.3 \pm 4.99$ & $15.4 \pm 3.72$  \\
9  & $11.7 \pm 3.3$ & $15.1 \pm 2.92$ & $17.6 \pm 2.95$  \\
10 & $10.3 \pm 2.31$ & $18.8 \pm 4.66$ & $20.3 \pm 4.08$ \\
\end{tabular}
\end{table}

\begin{table}[htbp]
\centering
\caption{GNSP likelihood for 1D-Euclidean}
\begin{tabular}{c|c|c|c}
$m$ & $N=10$ & $N=50$ & $N=100$ \\
\midrule
3  & $7.5 \pm 2.76$  & $32.5 \pm 2.99$ & $31.7 \pm 4.5$  \\
4  & $15.1 \pm 3.87$ & $35 \pm 4.52$   & $36.4 \pm 4.65$ \\
5  & $24.3 \pm 4.35$ & $55.8 \pm 4.47$ & $70.8 \pm 4.18$ \\
6  & $26.3 \pm 3.23$ & $69.1 \pm 5.07$ & $78.5 \pm 4.97$ \\
7  & $29.1 \pm 5.36$ & $75.5 \pm 4.58$ & $87.1 \pm 2.69$ \\
8  & $35 \pm 5.54$   & $81.5 \pm 4.5$  & $93.6 \pm 2.76$ \\
9  & $38.6 \pm 3.24$ & $85.5 \pm 3.27$ & $96.1 \pm 1.66$ \\
10 & $37.3 \pm 5.89$ & $90 \pm 1.83$   & $97.9 \pm 1.45$ \\
\end{tabular}
\end{table}

\newpage
\section{Heatmap for GNSP transitions for $m=8$ and $m=9$}

\begin{figure}[ht]
\begin{subfigure}[h]{\textwidth}
    \centering
    \includegraphics[width=\linewidth]{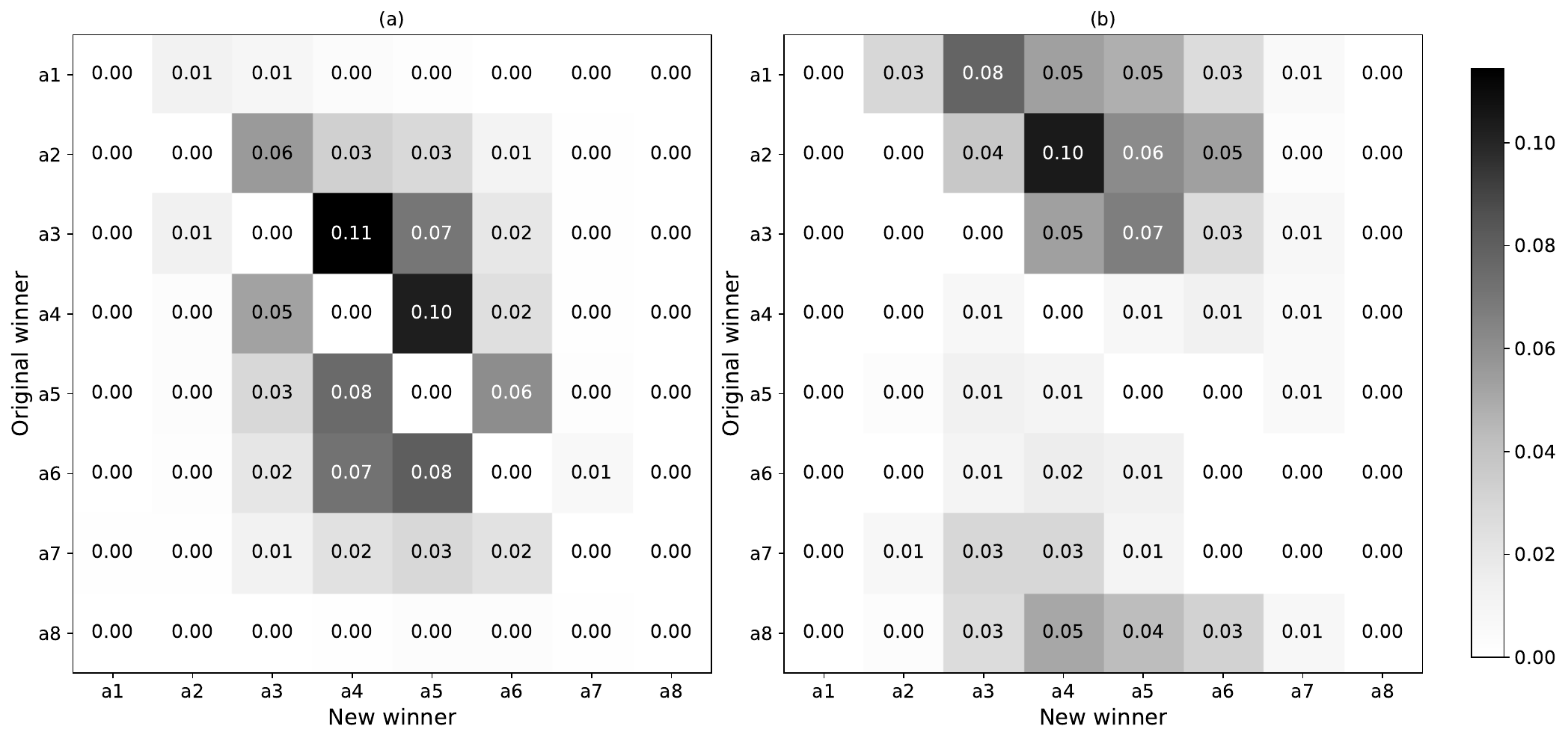}
    \label{fig:heatmap8}
\end{subfigure}
\begin{subfigure}[h]{\textwidth}
    \centering
    \includegraphics[width=\linewidth]{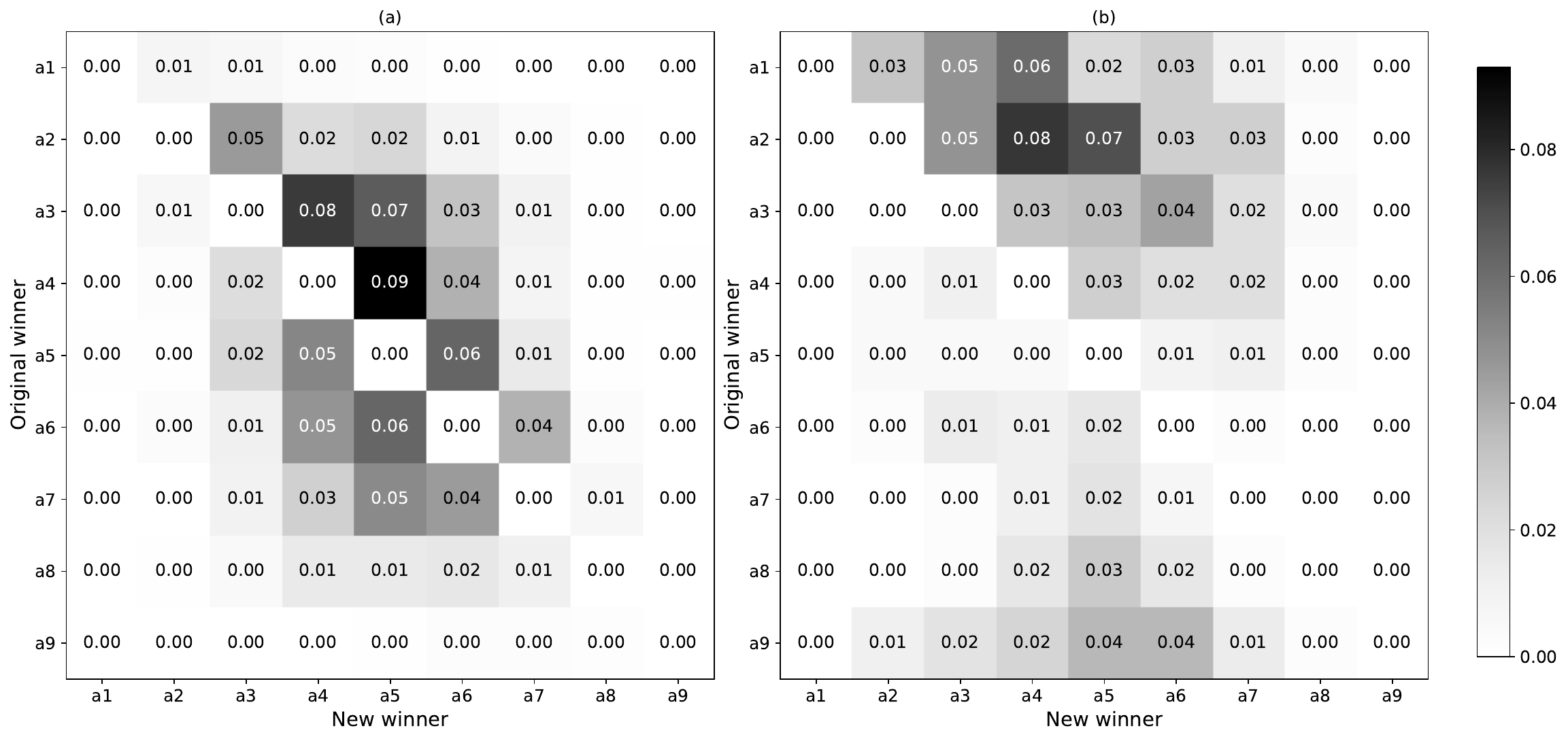}
    \label{fig:heatmap9}
\end{subfigure}
\caption{Likelihood heatmap of winner transitions caused by GNSP for STV for $m=8$. (left) Single-peaked preferences (Conitzer), (right) Single-crossing preferences. (top) $m=8$, (bottom) $m=9$}
\end{figure}

\newpage

\section{Results for Single-peaked Preferences sampled with Walsh's method}

As noted, the empirical likelihood for single-peaked preferences sampled with Walsh's method does not grow monotonically, like the other sampling profiles. It specifically shows an odd-even pattern. 
We first note that in contrast to the profiles generated with Conitzer's method, for Walsh's method, since all single-peaked rankings (under a certain ordering of the alternatives) are equally likely, the winners tend to be at center, as seen in Figure~\ref{fig:hist_spw}. Particularly for odd number of alternatives, the median alternative is the winner the highest number of times, with its two neighboring alternative with some ratio. On the other hand, for even number of alternatives (exemplified with $m=8$), there are two median alternatives, both of which are highly likely to be the winner. 
Hence, for GNSP to occur, the winner needs to change from (one of) the median alternative(s), to one towards the edge. 

\begin{figure}[ht]
    \centering
    \includegraphics[width=\linewidth]{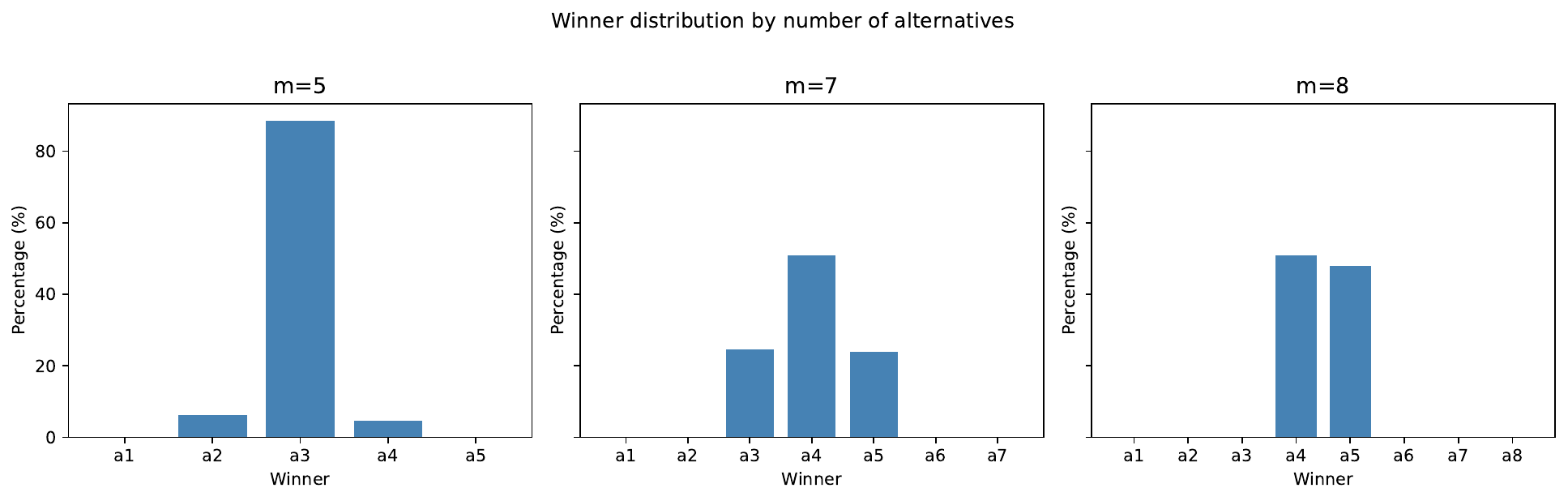}
    \caption{STV winner empirical distribution for single-peaked preferences (Walsh-sampled)}
    \label{fig:hist_spw}
\end{figure}

\begin{figure}[ht]
    \centering
    \includegraphics[width=\linewidth]{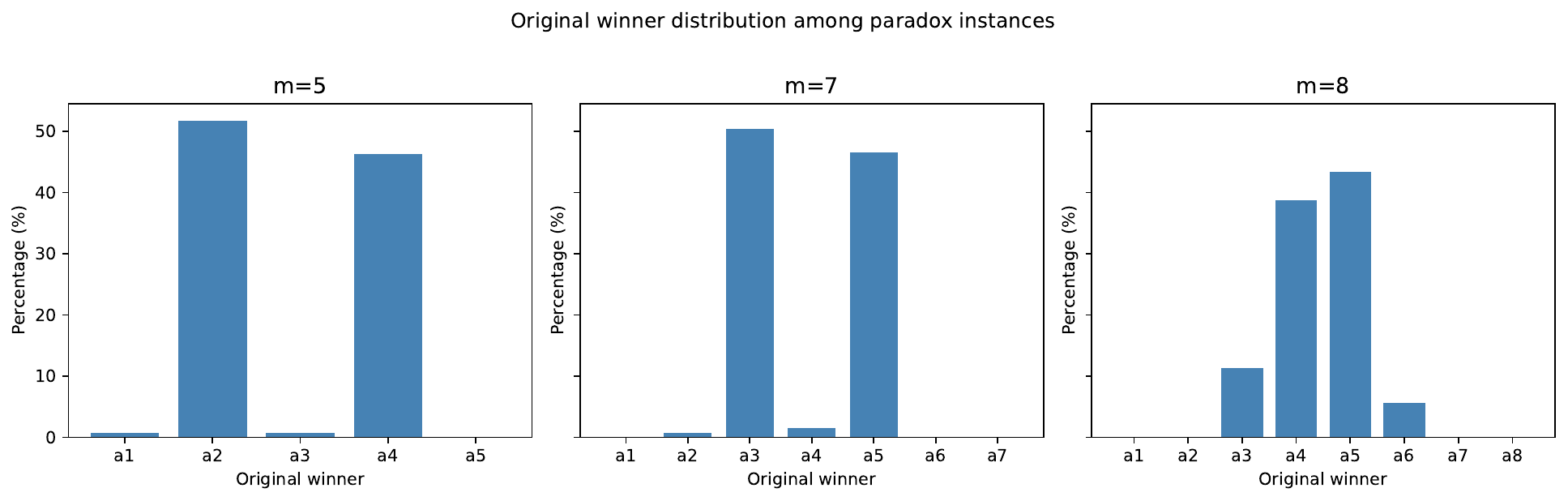}
    \caption{STV winner empirical distribution for single-peaked preferences (Walsh-sampled) where GNSP occurs}
    \label{fig:hist_gnsp_spw}
\end{figure}

\begin{figure}[ht]
    \centering
    \includegraphics[width=\linewidth]{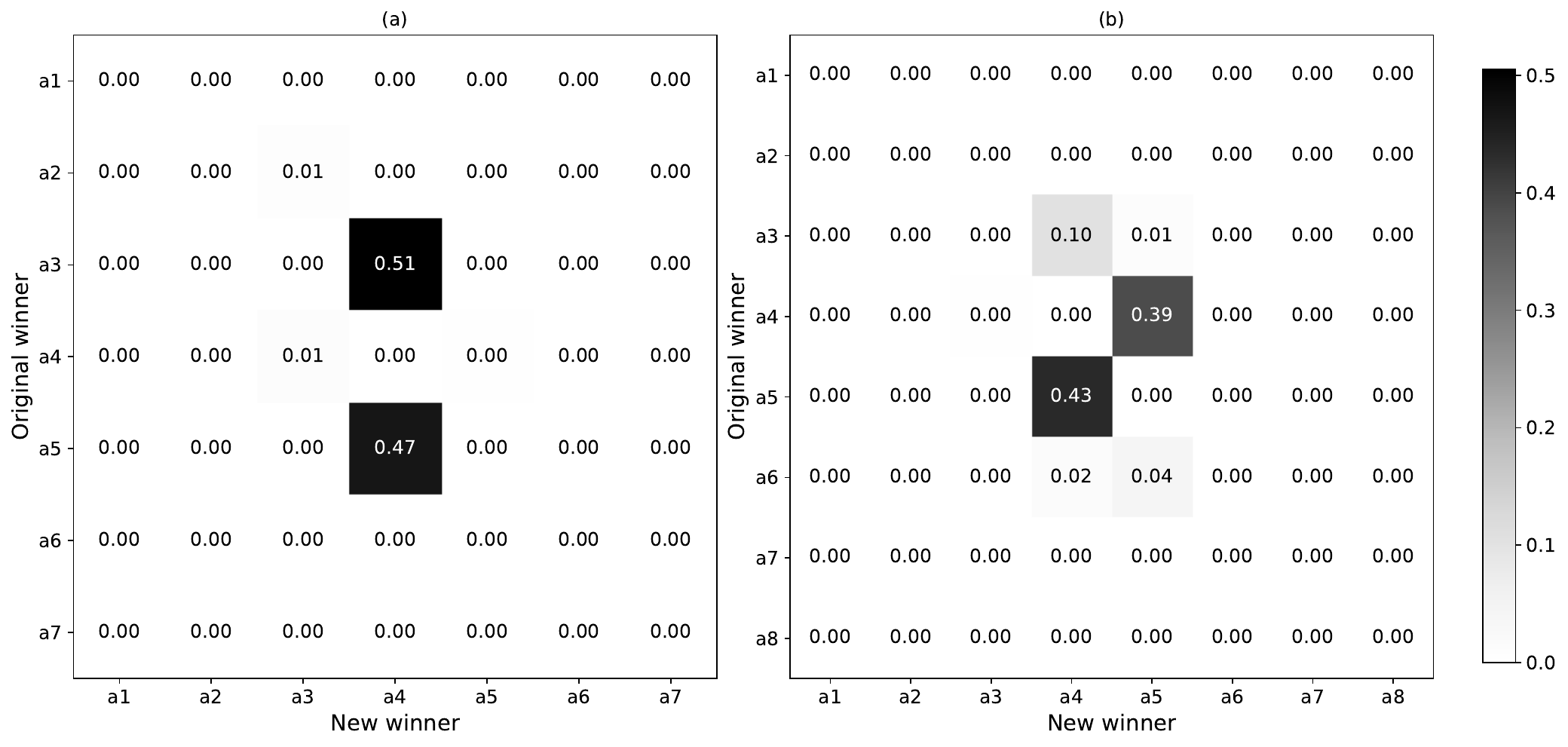}
    \caption{Likelihood heatmap of winner transitions caused by GNSP for STV for single-peaked preferences (Walsh). (Left) $m=7$, (Right) $m=8$}
    \label{fig:heatmap_spw}
\end{figure}

In Figures~\ref{fig:hist_gnsp_spw} and~\ref{fig:heatmap_spw}, we observe the histogram of original winner and heatmap for original to new winner for profiles with GNSP. For odd number of alternatives, we notice a high likelihood of GNSP for when the original winner was one of the neighbors to the median alternative. And then the GNSP causes the winner to move towards the median alternative. We notice that this towards-center move is reminiscent of Proposition~\ref{prop:suff} which was for 1D-Euclidean profiles, even though there are no explicitly left or right voters for single-peaked preferences. For even alternative profiles, we see that the bulk of GNSP occur when the winner changes from one median alternative to the other one. We notice that as we go to more alternatives ($m=5$ to $m=7$), the likelihood of winning for the median-neighbors increase and along with that the empirical likelihood increases as well.

For even alternative profiles, we note the GNSP occurs mostly when the winner changes from one median to another. We conjecture that since each alternative is very robust because of how the profiles are sampled, there is less mass that can pull the winner farther away from the center. The likelihood pattern for single-peaked (Walsh) preferences thus shows us an interesting pattern and definitely a problem that can be addressed in future work. We also think that on a similar note, there is work that can be done to find sufficient conditions for single-peaked preferences in general like we do for 1D-Euclidean and single-crossing preferences.

\newpage

\section{Experimental Results for Strategic Abstention}

For $m \in \{4,5,6\}$, we run simulations with synthetic 1D-Euclidean preference profiles. We fix one type of voter (so, a specific ranking) at a time, and choose group sizes of $k \in \{1,2,3\}$ to abstain from voting. We report the fraction of profiles where the (group of) voters are strictly better off for abstaining as opposed to the fraction where they are strictly worse.

\begin{table}[ht]
\centering
\caption{Fractions of profiles better/worse by Voter Type and Abstention Size for $m=4$. \textbf{Bold} indicates scenarios where abstaining is highly incentivized.}
\label{tab:abstention2}
\begin{tabular}{l*{6}{c}}
& \multicolumn{6}{c}{Abstention Size} \\
\cmidrule(lr){2-7}
& \multicolumn{2}{c}{$k=1$} & \multicolumn{2}{c}{$k=2$} & \multicolumn{2}{c}{$k=3$} \\
\cmidrule(lr){2-3} \cmidrule(lr){4-5} \cmidrule(lr){6-7}
Voter Type & Better & Worse & Better & Worse & Better & Worse \\
\midrule
$a_1 \succ a_2 \succ a_3 \succ a_4$ & 0.023 & 0.021 & 0.051 & 0.022 & 0.063 & 0.045 \\
$a_2 \succ a_1 \succ a_3 \succ a_4$ & 0.000 & 0.010 & 0.000 & 0.010 & 0.004 & 0.017 \\
$a_2 \succ a_3 \succ a_1 \succ a_4$ & 0.006 & 0.006 & 0.004 & 0.016 & 0.008 & 0.039 \\
$a_3 \succ a_2 \succ a_4 \succ a_1$ & 0.002 & 0.012 & 0.002 & 0.031 & 0.002 & 0.055 \\
$a_3 \succ a_4 \succ a_2 \succ a_1$ & 0.000 & 0.002 & 0.000 & 0.010 & 0.004 & 0.012 \\
$a_4 \succ a_3 \succ a_2 \succ a_1$ & \textbf{0.028} & 0.000 & \textbf{0.048} & 0.021 & \textbf{0.075} & 0.023 \\
\end{tabular}
\end{table}

\begin{table}[ht]
\centering
\caption{Fractions of profiles better/worse by Voter Type and Abstention Size for $m=6$. \textbf{Bold} indicates scenarios where abstaining is highly incentivized.}
\label{tab:abstention6}
\begin{tabular}{l*{6}{c}}
& \multicolumn{6}{c}{Abstention Size} \\
\cmidrule(lr){2-7}
& \multicolumn{2}{c}{$k=1$} & \multicolumn{2}{c}{$k=2$} & \multicolumn{2}{c}{$k=3$} \\
\cmidrule(lr){2-3} \cmidrule(lr){4-5} \cmidrule(lr){6-7}
Voter Type & Better & Worse & Better & Worse & Better & Worse \\
\midrule
$a_1 \succ a_2 \succ a_3 \succ a_4 \succ a_5 \succ a_6$ & 0.035 & 0.055 & 0.060 & 0.074 & 0.068 & 0.120 \\
$a_2 \succ a_1 \succ a_3 \succ a_4 \succ a_5 \succ a_6$ & \textbf{0.069} & 0.010 & \textbf{0.109} & 0.011 & \textbf{0.149} & 0.028 \\
$a_2 \succ a_3 \succ a_1 \succ a_4 \succ a_5 \succ a_6$ & 0.042 & 0.029 & 0.056 & 0.048 & 0.090 & 0.100 \\
$a_3 \succ a_2 \succ a_4 \succ a_1 \succ a_5 \succ a_6$ & 0.011 & 0.056 & 0.021 & 0.064 & 0.036 & 0.114 \\
$a_3 \succ a_4 \succ a_2 \succ a_5 \succ a_1 \succ a_6$ & 0.011 & 0.065 & 0.023 & 0.112 & 0.037 & 0.171 \\
$a_4 \succ a_3 \succ a_5 \succ a_2 \succ a_6 \succ a_1$ & 0.014 & 0.031 & 0.027 & 0.093 & 0.043 & 0.142 \\
$a_4 \succ a_5 \succ a_3 \succ a_6 \succ a_2 \succ a_1$ & 0.017 & 0.006 & 0.027 & 0.073 & 0.042 & 0.090 \\
$a_5 \succ a_4 \succ a_6 \succ a_3 \succ a_2 \succ a_1$ & 0.039 & 0.028 & 0.070 & 0.065 & 0.105 & 0.093 \\
$a_5 \succ a_6 \succ a_4 \succ a_3 \succ a_2 \succ a_1$ & \textbf{0.059} & 0.004 & \textbf{0.107} & 0.019 & \textbf{0.153} & 0.019 \\
$a_6 \succ a_5 \succ a_4 \succ a_3 \succ a_2 \succ a_1$ & 0.035 & 0.021 & 0.056 & 0.083 & 0.070 & 0.103 \\
\end{tabular}
\end{table}

\section{Computational Details}

All experiments were run on a CPU without use of GPU. The CPU configuration is given below.
\begin{itemize}
    \item Architecture: x86 64
    \item CPU(s): 16
    \item RAM: 16GB
    \item Thread(s) per core: 2
    \item Core(s) per socket: 8
    \item Socket(s): 1
    \item Processor: AMD Ryzen 7 7700 8-Core Processor (3.80 GHz)
\end{itemize}

All ILPs were solved using the Gurobi solver. We did not run into timeouts for any of the preference profiles. No full solution took more than $90$ seconds to compute, this highest coming for $n=100$ and $m=10$. 

For data generation, we used sets of random seeds to generate the preference profiles for all distributions. 
All code for the ILP and continuum profiles for Section~\ref{sec:theo}'s discussion is at \url{https://github.com/farhadmohsin/gnsp-stv-1d}.
Full set of generated preference profiles, and minimal GNSP cases can be provided upon contacting the corresponding author.

\end{document}